\def\BibTeX{{\rm B\kern-.05em{\sc i\kern-.025em b}\kern-.08em T\kern-.1667em\lower.7ex\hbox{E}\kern-.125emX}}
\algnewcommand{\Initialize}[1]{%
  \State \textbf{Initialization:}
  \Statex \hspace*{\algorithmicindent}\parbox[t]{0.8\linewidth}{\raggedright #1}
}
\theoremstyle{definition}
\newtheorem{theorem}{Theorem}
\newtheorem{definition}{Definition}
\newtheorem{lemma}{Lemma}
\newtheorem{prop}{Proposition}
\newtheorem{remark}{Remark}
\newcommand{\norm}[1]{\left\lVert#1\right\rVert}
\newcommand{\relmiddle}[1]{\mathrel{}\middle#1\mathrel{}} 
\def\br{\mathbb R}
\def\vE{\mathbb E}
\font\b=cmr10 scaled\magstep4
\def\bigzerou{\smash{\lower1.7ex\hbox{\b 0}}}
\def\bigzerou{\smash{\lower1.7ex\hbox{\b 0}}}
\begin{document}

\title{Several Representations of $\alpha$-Mutual Information and Interpretations as Privacy Leakage Measures
\thanks{This work was supported by JSPS KAKENHI Grant Number JP23K16886.}
}

\author{
\IEEEauthorblockN{Akira Kamatsuka}
\IEEEauthorblockA{Shonan Institute of Technology \\ 
Email: \text{kamatsuka@info.shonan-it.ac.jp}
 }
\and
\IEEEauthorblockN{Takahiro Yoshida}
\IEEEauthorblockA{Nihon University \\ 
Email: \text{yoshida.takahiro@nihon-u.ac.jp}
 } 
}

\maketitle

\begin{abstract}
In this paper, we present several novel representations of $\alpha$-mutual information ($\alpha$-MI) in terms of R{\'e}nyi divergence and conditional R{\' e}nyi entropy.
The representations are based on the variational characterizations of $\alpha$-MI using a reverse channel. 
Based on these representations, we provide several interpretations of the $\alpha$-MI as privacy leakage measures using generalized mean and gain functions.
Further, as byproducts of the representations, we propose novel conditional R{\' e}nyi entropies that satisfy the property that conditioning reduces entropy and data-processing inequality.
\end{abstract}

\section{Introduction} \label{sec:intro}
$\alpha$-Mutual information ($\alpha$-MI \cite{7308959}) is the common generalization of Shannon MI using a tunable parameter $\alpha \in (0, 1) \cup (1, \infty)$ and is closely related to various problems in information theory, such as the generalized cutoff rate, the error and correct decoding exponents in channel coding \cite{370121,e23020199}, 
the error and strong converse exponents in hypothesis testing \cite{6034266,8231191,8007073}, and the privacy-preserving data publishing \cite{8804205,8943950,9162168,10352344,6266165,6957119}. 

The $\alpha$-MI includes Sibson MI \cite{Sibson1969InformationR}, Arimoto MI \cite{arimoto1977}, Augustin--Csisz{\' a}r MI \cite{augusting_phd_thesis},\cite{370121}, Hayashi MI \cite{cryptoeprint:2013/440}, and Lapidoth--Pfister MI \cite{e21080778},\cite{8231191}. 
These types of $\alpha$-MI are defined as analogies of several representations of the corresponding Shannon MI based on R{\' e}nyi information, such as R{\' e}nyi entropy $H_{\alpha}(X)$, Arimoto conditional entropy $H_{\alpha}^{\text{A}}(X | Y)$, Hayashi conditional entropy $H_{\alpha}^{\text{H}}(X | Y)$ \cite{5773033}, and R{\' e}nyi divergence $D_{\alpha}(\cdot || \cdot)$. The Arimoto MI and Hayashi MI are defined as the difference between the R{\' e}nyi entropy and conditional R{\' e}nyi entropy: $I_{\alpha}^{(\cdot)}(X; Y):= H_{\alpha}(X) - H_{\alpha}^{(\cdot)}(X | Y)$. 
Meanwhile, the other types of $\alpha$-MI are defined as minimization problems concerning R{\' e}nyi divergence $D_{\alpha}(\cdot || \cdot)$.
Extensive theoretical properties and operational interpretations of the $\alpha$-MI and R{\'e}nyi information have been reported in the literature \cite{1055640,7282554,e21100969,e22050526,Nakiboglu:2019aa,8423117,8849809,e23060702,4595361,10619200,10619378,10619657,kamatsuka2024algorithms,10619174,10206941,10619672,8550766,9505206,9064819}. 

Recently, there has been extensive research on the relationship between $\alpha$-MI and privacy leakage measures in privacy-preserving data publishing \cite{8804205,8943950,9162168,10352344,6266165,6957119}.
For example, Liao \textit{et al.} \cite{8804205} showed that the Arimoto MI can be interpreted as the \textit{$\alpha$-leakage}, a privacy metric defined as the multiplicative increase of the maximal expected gain of a guessing adversary upon observing published data $Y$. 
Recently, Zarrabian and Sadeghi \cite{zarrabian2025extensionadversarialthreatmodel} provided an interpretation of Sibson MI as the generalized average of pointwise $\alpha$-leakage.
However, the interpretation of other types of $\alpha$-MI as privacy measures remains unknown.

In this paper, we investigate several representations and interpretations of $\alpha$-MI. Specifically, we address the following three issues:
\begin{description}
\item[Q$1$.] Can Arimoto MI and Hayashi MI be expressed in terms of R{\' e}nyi divergence?
\item[Q$2$.] Can Sibson MI, Augustin--Csisz{\' a}r MI, and Lapidoth--Pfister MI be expressed as the difference between R{\'e}nyi entropy and conditional entropy?
\item[Q$3$.] Can every type of $\alpha$-MI be interpreted as leakage measures based on the adversary's decision-making in privacy-preserving data publishing?
\end{description}

The remainder of this paper is organized as follows. 
In Section \ref{sec:preliminaries}, we review $\alpha$-MI.  
In Section \ref{ssec:diff_MI}, we address Q1 and Q2 by representing the Arimoto MI and Hayashi MI in terms of R{\' e}nyi divergence and by representing the Sibson MI, Augustin--Csisz{\' a}r MI, and Lapidoth--Pfister MI differentially using conditional R{\' e}nyi entropies. 
As byproducts of the latter representations, we propose novel conditional R{\' e}nyi entropies that satisfy the property that conditioning reduces entropy (CRE) and data-processing inequality (DPI) in Section \ref{ssec:novel_reneyi}. 
Section \ref{ssec:g_leakage} presents novel interpretations of $\alpha$-MI as privacy leakage measures, addressing Q3.

\section{Preliminaries}\label{sec:preliminaries}

Let $X$ and $Y$ be random variables on finite alphabets $\mathcal{X}$ and $\mathcal{Y}$, drawn according to a given joint distribution $p_{X, Y} = p_{X}p_{Y\mid X}$.
Let $H(X):=-\sum_{x}p_{X}(x)\log p_{X}(x)$, $H(X | Y):=-\sum_{x,y}p_{X}(x)p_{Y\mid X}(y | x)\log p_{X\mid Y}(x | y)$, 
and $I(X; Y):= H(X) - H(X | Y)$ be the Shannon entropy, conditional entropy, and Shannon MI, respectively.  
Let $\Delta_{\mathcal{X}}$ be a set of all probability distributions on $\mathcal{X}$. 
For $\alpha>0$ and a probability distribution $p\in \Delta_{\mathcal{X}}$, we denote $p^{(\alpha)}\in \Delta_{\mathcal{X}}$ the $\alpha$-tilted distribution \cite{8804205} (also known as escort distribution \cite{10.5555/3019383}) of $p$, defined as follows:
\begin{align}
p^{(\alpha)}(x) := \frac{p(x)^{\alpha}}{\sum_{x^{\prime}}p(x^{\prime})^{\alpha}}. \label{eq:alpha_tilted}
\end{align} 
Let $\vE^{p_{X}}[f(X)]:=\sum_{x}p_{X}(x)f(x)$ and $\mathbb{G}^{p_{X}}[f(X)] := \prod_{x}f(x)^{p_{X}(x)} = \exp\{\vE^{p_{X}}[\log f(X)]\}$ 
be the expectation and the geometric mean of $f(X)$, respectively.
In this study, we use $\log$ to represent the natural logarithm.

We first review the R{\' e}nyi entropy, R{\' e}nyi divergence, and $\alpha$-MI. 

\begin{definition}
Let $\alpha\in (0, 1)\cup (1, \infty)$. Given distributions $p_{X}$ and $q_{X}$, the R{\' e}nyi entropy of order $\alpha$, 
denoted by $H_{\alpha}(p_{X}) = H_{\alpha}(X)$, and the R{\' e}nyi divergence between $p_{X}$ and $q_{X}$ of order $\alpha$, denoted by $D_{\alpha}(p_{X} || q_{X})$, 
are defined as follows:
\begin{align}
H_{\alpha}(X) &:= \frac{1}{1-\alpha} \log \sum_{x} p_{X}(x)^{\alpha}, \\
D_{\alpha}(p_{X} || q_{X}) &:= \frac{1}{\alpha-1}\log \sum_{x} p_{X}(x)^{\alpha}q_{X}(x)^{1-\alpha}.
\end{align}
\end{definition}

\begin{definition}
Let $\alpha\in (0, 1)\cup (1, \infty)$ and $(X, Y)\sim p_{X, Y}=p_{X}p_{Y\mid X}$. 
The \textit{Sibson MI, Arimoto MI, Augustin--Csisz{\' a}r MI, Hayashi MI, and Lapidoth--Pfister MI of order $\alpha$}, denoted by $I_{\alpha}^{\text{S}}(X; Y), I_{\alpha}^{\text{A}}(X; Y), I_{\alpha}^{\text{C}}(X; Y), I_{\alpha}^{\text{H}}(X; Y)$, and $I_{\alpha}^{\text{LP}}(X; Y)$, respectively, 
are defined as follows:

\begin{align}
I_{\alpha}^{\text{S}} (X; Y) &:= \min_{q_{Y}} D_{\alpha} (p_{X}p_{Y\mid X} || p_{X}q_{Y}) \label{eq:Sibson_MI} \\ 
I_{\alpha}^{\text{A}}(X; Y) &:= H_{\alpha}(X) - H_{\alpha}^{\text{A}}(X\mid Y), \label{eq:Arimoto_MI} \\
I_{\alpha}^{\text{C}}(X; Y) &:= \min_{q_{Y}}\vE^{p_{X}}\left[D_{\alpha}(p_{Y\mid X}(\cdot \mid X) || q_{Y})\right], \label{eq:AC_MI}\\ 
I_{\alpha}^{\text{H}}(X; Y) &:= H_{\alpha}(X) - H_{\alpha}^{\text{H}}(X\mid Y), \label{eq:Hayshi_MI} \\
I_{\alpha}^{\text{LP}}(X; Y) &:= \min_{q_{X}}\min_{q_{Y}} D_{\alpha}(p_{X}p_{Y\mid X} || q_{X}q_{Y}), \label{eq:LP_MI}
\end{align}
where the minimum in \eqref{eq:Sibson_MI} and \eqref{eq:AC_MI} is taken over all possible probability distributions on $\mathcal{Y}$,  
the minimum in \eqref{eq:LP_MI} is taken over all possible product distributions on $\mathcal{X}\times \mathcal{Y}$, and 
$H_{\alpha}^{\text{A}}(X | Y):= \frac{\alpha}{1-\alpha}\log\sum_{y} \left( \sum_{x}p_{X}(x)^{\alpha}p_{Y\mid X}(y\mid x)^{\alpha} \right)^{\frac{1}{\alpha}}$ is Arimoto conditional entropy of order $\alpha$ \cite{arimoto1977}, 
$H_{\alpha}^{\text{H}}(X | Y):= \frac{1}{1-\alpha}\log\sum_{y}p_{Y}(y) \sum_{x}p_{X\mid Y}(x | y)^{\alpha} $ is the Hayashi conditional entropy of order $\alpha$ \cite{5773033}. 
\end{definition}

\begin{remark}
Notably, the values of $\alpha$-MI are extended by continuity to $\alpha=1$ and $\alpha=\infty$. 
$\alpha=1$ corresponds to the Shannon MI, $I(X; Y)$.
\end{remark}

The Sibson MI and Arimoto MI have the following representations based on the Gallager error exponent function $E_{0}(\rho, p_{X}):= -\log \sum_{y}\left( \sum_{x}p_{X}(x)p_{Y\mid X}(y | x)^{\frac{1}{1+\rho}} \right)^{1+\rho}$ \cite{Gallager:1968:ITR:578869}.

\begin{prop}[\text{\cite[Eqs. (13) and (16)]{370121}\cite[Eq.(2)]{7308959}}] \label{prop:relationship_Arimoto_Sibson}
\begin{align}
I_{\alpha}^{\text{S}}(X; Y) &= \frac{\alpha}{1-\alpha} E_{0} \left( \frac{1}{\alpha}-1, p_{X} \right),  \label{eq:Sibson_MI_Gallager} \\ 
I_{\alpha}^{\text{A}}(X; Y) &= \frac{\alpha}{1-\alpha} E_{0} \left( \frac{1}{\alpha}-1, p_{X_{\alpha}} \right), \label{eq:Arimoto_MI_Gallager}
\end{align}
where 
and $p_{X_{\alpha}}:=p_{X}^{(\alpha)}$ denotes the $\alpha$-tilted distribution of $p_{X}$, defined in \eqref{eq:alpha_tilted}. 
\end{prop}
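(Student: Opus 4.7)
The plan is to unwind the definitions on both sides and match coefficients, with the only nontrivial step being the minimization over $q_Y$ in Sibson MI. First I would carry out the substitution $\rho = 1/\alpha - 1$ in the Gallager function, so that $1/(1+\rho) = \alpha$ and $1+\rho = 1/\alpha$, giving the compact form
\begin{equation}
\frac{\alpha}{1-\alpha} E_{0}\!\left(\tfrac{1}{\alpha}-1, p_{X}\right) = \frac{\alpha}{\alpha-1} \log \sum_{y} \left( \sum_{x} p_{X}(x) p_{Y\mid X}(y\mid x)^{\alpha} \right)^{1/\alpha}.
\end{equation}
This is the common target expression that both identities must reduce to (with $p_{X}$ replaced by $p_{X_{\alpha}}$ in the Arimoto case).

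For \eqref{eq:Sibson_MI_Gallager}, I would expand the R\'enyi divergence in \eqref{eq:Sibson_MI}:
\begin{equation}
D_{\alpha}(p_{X}p_{Y\mid X} \| p_{X}q_{Y}) = \frac{1}{\alpha-1} \log \sum_{y} q_{Y}(y)^{1-\alpha} f(y), \quad f(y):= \sum_{x} p_{X}(x) p_{Y\mid X}(y\mid x)^{\alpha},
\end{equation}
after the $p_{X}(x)^{\alpha}p_{X}(x)^{1-\alpha}=p_{X}(x)$ cancellation. The inner sum $\sum_{y} q_{Y}(y)^{1-\alpha} f(y)$ is maximized for $\alpha<1$ and minimized for $\alpha>1$ by the same tilted choice $q_{Y}^{\star}(y) \propto f(y)^{1/\alpha}$; this is a one-line Hölder/Lagrangian argument. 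Substituting $q_Y^\star$ yields $\sum_{y}q_{Y}^{\star}(y)^{1-\alpha}f(y) = \left(\sum_{y}f(y)^{1/\alpha}\right)^{\alpha}$, so the factor of $\alpha$ pops outside the logarithm and we obtain exactly the target above. The sign bookkeeping across $\alpha<1$ vs.\ $\alpha>1$ (min vs.\ max of the inner sum, together with the sign of $\frac{1}{\alpha-1}$) is the only subtle part and the main thing I would check carefully.

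For \eqref{eq:Arimoto_MI_Gallager}, I would work directly from the definition $I_{\alpha}^{\text{A}} = H_{\alpha}(X) - H_{\alpha}^{\text{A}}(X\mid Y)$. Writing $S := \sum_{x} p_{X}(x)^{\alpha}$, the key algebraic observation is that $p_{X}(x)^{\alpha} = S\,p_{X_{\alpha}}(x)$ by the definition of the $\alpha$-tilted distribution in \eqref{eq:alpha_tilted}. Substituting this into the Arimoto conditional entropy gives
\begin{equation}
\sum_{y}\!\left(\sum_{x} p_{X}(x)^{\alpha} p_{Y\mid X}(y\mid x)^{\alpha}\right)^{\!1/\alpha} = S^{1/\alpha} \sum_{y}\!\left(\sum_{x} p_{X_{\alpha}}(x) p_{Y\mid X}(y\mid x)^{\alpha}\right)^{\!1/\alpha}.
\end{equation}
Taking $\tfrac{\alpha}{1-\alpha}\log$ of this identity and observing that $\tfrac{\alpha}{1-\alpha}\cdot\tfrac{1}{\alpha}\log S = \tfrac{1}{1-\alpha}\log S = H_{\alpha}(X)$, the $H_{\alpha}(X)$ term cancels and leaves
\begin{equation}
I_{\alpha}^{\text{A}}(X;Y) = \frac{\alpha}{\alpha-1}\log \sum_{y}\!\left(\sum_{x} p_{X_{\alpha}}(x) p_{Y\mid X}(y\mid x)^{\alpha}\right)^{\!1/\alpha},
\end{equation}
which is the target with $p_X$ replaced by $p_{X_\alpha}$. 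No optimization is needed here; the whole identity is a pure algebraic rearrangement powered by the tilting relation. I expect the Sibson case to be the only nontrivial step, since it requires the explicit minimizer, whereas Arimoto follows once the $\alpha$-tilting substitution is spotted.
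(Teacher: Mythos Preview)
Your proof is correct. The substitution $\rho=\tfrac{1}{\alpha}-1$ in $E_0$ and the Sibson-type minimization over $q_Y$ via the tilted choice $q_Y^\star(y)\propto f(y)^{1/\alpha}$ are the standard computations, and your sign bookkeeping (max vs.\ min of the inner sum balanced by the sign of $\tfrac{1}{\alpha-1}$) is exactly right. The Arimoto identity via the tilting relation $p_X(x)^\alpha=S\,p_{X_\alpha}(x)$ is also correct and is the cleanest way to see the cancellation of $H_\alpha(X)$.

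As for comparison with the paper: the paper does not give its own proof of this proposition. It states the result as known, with citations to Csisz{\'a}r \cite[Eqs.~(13) and (16)]{370121} and Verd{\'u} \cite[Eq.~(2)]{7308959}, and proceeds to use it. Your write-up is therefore a self-contained direct verification that the paper omits; it matches the classical derivations in those references (the Sibson part is essentially Sibson's identity for the minimizing $q_Y$, and the Arimoto part is the $\alpha$-tilting trick that underlies Csisz{\'a}r's Eq.~(16)).
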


It is known that if the channel $p_{Y\mid X}$ is noiseless, i.e., $X=Y$, then $\alpha$-MI is equal to the R{\' e}nyi entropy of $X$ of some order
\cite[Eq.(23)]{370121},\cite[Lemma 11]{e21080778},\cite[Thm 2, Remark 2]{cryptoeprint:2013/440},\cite[Thm 4.2]{esposito2024sibsons}.
\begin{prop} \label{prop:X_equal_Y}
For $\alpha\in (0, 1) \cup (1, \infty)$, 
\begin{align}
I_{\alpha}^{\text{S}}(X; X) &= H_{\frac{1}{\alpha}}(X), \\ 
I_{\alpha}^{\text{A}}(X; X) &= I_{\alpha}^{\text{H}}(X; X) = H_{\alpha}(X), \\ 
I_{\alpha}^{\text{C}}(X; X) &= H(X). 
\end{align}
For $\alpha \in (1/2, 1) \cup (1, \infty)$, 
\begin{align}
I_{\alpha}^{\text{LP}}(X; X) &= H_{\frac{\alpha}{2\alpha-1}}(X).
\end{align}
\end{prop}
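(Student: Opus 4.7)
The proof naturally splits into five cases, one per identity. The first four are short reductions that exploit the noiseless structure $p_{Y\mid X}(y\mid x) = \one{y=x}$; the Lapidoth--Pfister case is the only one that needs real work.

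For the Arimoto and Hayashi identities, the plan is to show that $H_{\alpha}^{\text{A}}(X\mid X) = H_{\alpha}^{\text{H}}(X\mid X) = 0$ by direct substitution. In Arimoto's formula, $p_{Y\mid X}(y\mid x)^{\alpha} = \one{y=x}$ collapses the inner sum to $p_{X}(y)^{\alpha}$, raising to $1/\alpha$ gives $p_{X}(y)$, summing over $y$ gives $1$, and the logarithm vanishes. In Hayashi's formula, $p_{X\mid Y}(x\mid y) = \one{x=y}$ makes $\sum_{x} p_{X\mid Y}(x\mid y)^{\alpha} = 1$, and again the logarithm vanishes. Combined with \eqref{eq:Arimoto_MI} and \eqref{eq:Hayshi_MI}, both sides equal $H_{\alpha}(X)$.

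For the Sibson identity, the cleanest route is through the Gallager representation \eqref{eq:Sibson_MI_Gallager}: with $\rho = 1/\alpha - 1$ and $Y=X$, the inner sum $\sum_{x}p_{X}(x)p_{Y\mid X}(y\mid x)^{1/(1+\rho)}$ reduces to $p_{X}(y)$, so $E_{0}(1/\alpha - 1, p_{X}) = -\log\sum_{y} p_{X}(y)^{1/\alpha}$, and multiplying by $\alpha/(1-\alpha)$ matches $H_{1/\alpha}(X) = \frac{\alpha}{\alpha - 1}\log \sum_{y}p_{X}(y)^{1/\alpha}$. For the Augustin--Csisz\'ar identity, observe that $D_{\alpha}(\delta_{x} \| q_{Y}) = -\log q_{Y}(x)$ holds for every $\alpha$, so $\vE^{p_{X}}[D_{\alpha}(p_{Y\mid X}(\cdot\mid X)\| q_{Y})] = -\sum_{x}p_{X}(x)\log q_{Y}(x)$ is the cross entropy, minimized at $q_{Y} = p_{X}$ with value $H(X)$.

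The Lapidoth--Pfister identity is the main obstacle. The plan is as follows. First, substitute $p_{Y\mid X}(y\mid x) = \one{y=x}$ to write
\begin{align*}
D_{\alpha}(p_{X}p_{Y\mid X} \| q_{X}q_{Y}) = \frac{1}{\alpha - 1}\log\sum_{x} p_{X}(x)^{\alpha}\bigl[q_{X}(x)q_{Y}(x)\bigr]^{1-\alpha}.
\end{align*}
Second, symmetrize: with $v(x) := \sqrt{q_{X}(x)q_{Y}(x)}$ and $V := \sum_{x} v(x)$, AM--GM gives $V \le 1$ with equality iff $q_{X} = q_{Y}$; setting $s := v/V$ yields $q_{X}(x)q_{Y}(x) = V^{2}s(x)^{2}$ with $s \in \Delta_{\mathcal{X}}$, so the argument of the log factors as $V^{2(1-\alpha)}\sum_{x}p_{X}(x)^{\alpha}s(x)^{2(1-\alpha)}$. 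Checking the sign of $(1-\alpha)/(\alpha-1)$ in both regimes shows that the optimum forces $V = 1$, i.e., $q_{X} = q_{Y} =: q$. Third, optimize $\sum_{x}p_{X}(x)^{\alpha}q(x)^{2(1-\alpha)}$ over $q\in\Delta_{\mathcal{X}}$ by a Lagrange multiplier, obtaining $q^{\star}(x) \propto p_{X}(x)^{\alpha/(2\alpha - 1)}$ (this is where $\alpha > 1/2$ enters, so that the exponent and the solution make sense). Substituting back, the exponents collapse as $\alpha + 2(1-\alpha)\alpha/(2\alpha - 1) = \alpha/(2\alpha - 1)$, giving the optimized value $\bigl[\sum_{x}p_{X}(x)^{\alpha/(2\alpha - 1)}\bigr]^{2\alpha - 1}$. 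Taking $\frac{1}{\alpha - 1}\log(\cdot)$ and recognizing the order-$\alpha/(2\alpha - 1)$ R\'enyi entropy yields $H_{\alpha/(2\alpha - 1)}(X)$. The delicate point throughout is the sign tracking for $\alpha \lessgtr 1$ to confirm that the critical point is a minimum of $I_{\alpha}^{\text{LP}}$, not a maximum.
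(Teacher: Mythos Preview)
Your argument is correct in all five cases. The paper, however, does not supply its own proof of this proposition: it is stated as a known fact with pointers to \cite[Eq.~(23)]{370121}, \cite[Lemma~11]{e21080778}, \cite[Thm~2, Remark~2]{cryptoeprint:2013/440}, and \cite[Thm~4.2]{esposito2024sibsons}. So there is nothing in the paper to compare your approach against line by line.

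That said, your treatment of the Lapidoth--Pfister case is worth a remark. The cited reference \cite[Lemma~11]{e21080778} proves $I_{\alpha}^{\text{LP}}(X;X)=H_{\alpha/(2\alpha-1)}(X)$ by first invoking the general fact that the double minimum in \eqref{eq:LP_MI} is attained at $q_X=q_Y$ whenever the joint $p_{X,Y}$ is exchangeable (which it trivially is when $Y=X$), and then optimizing the single-variable problem. Your AM--GM symmetrization is a self-contained alternative: writing $q_X(x)q_Y(x)=V^2 s(x)^2$ and observing that the objective splits as $-2\log V + g(s)$ with $-2\log V\ge 0$ gives the reduction to $q_X=q_Y$ without appealing to any exchangeability lemma. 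This is slightly more elementary and makes the role of the constraint $\alpha>1/2$ transparent (it is exactly what keeps the exponent $\alpha/(2\alpha-1)$ positive and the Lagrange stationary point inside the simplex, with the concavity/convexity of $q\mapsto q^{2(1-\alpha)}$ certifying optimality in the two regimes). The remaining four identities are routine either way, and your direct substitutions match what one would extract from the cited sources.
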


Now, suppose that an adversary guesses $X$ from disclosed information $Y$ using a randomized decision rule $r_{\hat{X}\mid Y}$ with the following gain function $g_{\alpha}(\cdot, \cdot)\colon \mathcal{X}\times \Delta_{\mathcal{X}}\to \br$, which we refer to as \textit{$\alpha$-score}.

\begin{definition}[$\alpha$-score] Let $\alpha \in (0, \infty]$. Then, the \textit{$\alpha$-score} is defined as follows:
\begin{align}
g_{\alpha}(x, r) &:= 
\begin{cases}
\log r(x)-1, & \alpha=1, \\
\frac{\alpha}{\alpha-1}r(x)^{1-\frac{1}{\alpha}}, & \alpha\in (0, 1)\cup (1, \infty), \\ 
r(x), & \alpha=\infty.
\end{cases} \label{eq:alpha_score}
\end{align}

\end{definition}
Liao \textit{et al.} \cite{8804205} proved that Arimoto MI can be interpreted as a privacy leakage measure. 
\begin{prop}[\text{\cite[Thm 1]{8804205}}]
Let $\alpha\in (0, 1) \cup (1, \infty)$. Then, 
\begin{align}
I_{\alpha}^{\text{A}}(X; Y) &= \frac{\alpha}{\alpha-1} \log \frac{\max_{r_{\hat{X}\mid Y}}\vE_{X, Y}\left[g_{\alpha}(X, r_{\hat{X}\mid Y}(\cdot\mid Y))\right]}{\max_{r_{\hat{X}}} \vE\left[g_{\alpha}(X, r_{\hat{X}})\right]}, \label{eq:Arimoto_MI_leakage}
\end{align}
where the RHS of \eqref{eq:Arimoto_MI_leakage} is referred to as the \textit{$\alpha$-leakage from $X$ to $Y$} \cite[Def 8]{9162168}.
\end{prop}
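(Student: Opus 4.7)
The plan is to compute the numerator and denominator of the ratio in \eqref{eq:Arimoto_MI_leakage} separately by directly optimizing the expected $\alpha$-score over randomized decision rules, and then manipulate the resulting expression algebraically until it matches the definition $I_{\alpha}^{\text{A}}(X;Y) = H_{\alpha}(X) - H_{\alpha}^{\text{A}}(X\mid Y)$.

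First I would handle the denominator. Expanding the expectation gives $\vE[g_{\alpha}(X, r_{\hat X})] = \frac{\alpha}{\alpha-1}\sum_{x} p_{X}(x) r_{\hat X}(x)^{(\alpha-1)/\alpha}$, to be maximized over $r_{\hat X}\in\Delta_{\mathcal{X}}$. A Lagrange multiplier computation (differentiating with respect to $r_{\hat X}(x)$ and imposing $\sum_{x}r_{\hat X}(x)=1$) yields the critical point $r_{\hat X}^{*}(x)\propto p_{X}(x)^{\alpha}$, i.e.\ the $\alpha$-tilted distribution $p_{X}^{(\alpha)}$ defined in \eqref{eq:alpha_tilted}. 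Substituting back produces $\max_{r_{\hat X}}\vE[g_{\alpha}(X,r_{\hat X})] = \frac{\alpha}{\alpha-1}\bigl(\sum_{x}p_{X}(x)^{\alpha}\bigr)^{1/\alpha}$.

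Second, I would apply the same idea pointwise in $y$ to the numerator. Writing $\vE_{X,Y}[g_{\alpha}(X,r_{\hat X\mid Y}(\cdot\mid Y))] = \sum_{y}p_{Y}(y)\sum_{x}p_{X\mid Y}(x\mid y)g_{\alpha}(x,r_{\hat X\mid Y}(\cdot\mid y))$, the outer $y$-sum decouples because the simplex constraints on the rows $r_{\hat X\mid Y}(\cdot\mid y)$ are independent. Replacing $p_{X}$ by $p_{X\mid Y}(\cdot\mid y)$ in the previous step shows that the optimizer is the $\alpha$-tilted conditional $r_{\hat X\mid Y}^{*}(\cdot\mid y) = p_{X\mid Y}^{(\alpha)}(\cdot\mid y)$, and the maximum equals $\frac{\alpha}{\alpha-1}\sum_{y}p_{Y}(y)\bigl(\sum_{x}p_{X\mid Y}(x\mid y)^{\alpha}\bigr)^{1/\alpha}$.

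Third, I would form the ratio (the constant $\alpha/(\alpha-1)$ cancels) and use Bayes' rule $p_{Y}(y)p_{X\mid Y}(x\mid y)=p_{X}(x)p_{Y\mid X}(y\mid x)$ to absorb the factor $p_{Y}(y)$ into the inner $(\cdot)^{1/\alpha}$, giving $\sum_{y}p_{Y}(y)\bigl(\sum_{x}p_{X\mid Y}(x\mid y)^{\alpha}\bigr)^{1/\alpha} = \sum_{y}\bigl(\sum_{x}p_{X}(x)^{\alpha}p_{Y\mid X}(y\mid x)^{\alpha}\bigr)^{1/\alpha}$. Taking $\frac{\alpha}{\alpha-1}\log(\cdot)$ of the numerator then reproduces $-H_{\alpha}^{\text{A}}(X\mid Y)$ by definition, and of the denominator reproduces $-H_{\alpha}(X)$; subtracting gives $H_{\alpha}(X)-H_{\alpha}^{\text{A}}(X\mid Y)=I_{\alpha}^{\text{A}}(X;Y)$.

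The only real subtlety, rather than a substantial obstacle, is checking that the Lagrangian critical point is indeed a maximizer uniformly across $\alpha\in(0,1)$ and $\alpha\in(1,\infty)$. In the latter regime the coefficient $\frac{\alpha}{\alpha-1}$ is positive and $r\mapsto r^{(\alpha-1)/\alpha}$ is concave on $[0,1]$, so standard concavity arguments apply; in the former the coefficient is negative and the power function is convex, but the product is again concave, so the same critical point is the global maximizer. Once this sign-tracking is handled consistently, the remaining steps are routine algebra.
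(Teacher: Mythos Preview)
Your proposal is correct and complete: the Lagrangian computation identifies the $\alpha$-tilted distribution as the optimizer, the concavity check across both ranges of $\alpha$ is handled properly, and the Bayes-rule manipulation that converts $\sum_{y}p_{Y}(y)\bigl(\sum_{x}p_{X\mid Y}(x\mid y)^{\alpha}\bigr)^{1/\alpha}$ into the Arimoto conditional entropy expression is exactly right.

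As for comparison: the paper does not actually supply its own proof of this proposition. It is quoted in the preliminaries as a result of Liao \textit{et al.} \cite[Thm 1]{8804205}, and when the identity reappears as \eqref{eq:Arimoto_privacy_01} in Theorem~\ref{thm:interpetations_alpha_MI}, the appendix simply says it ``has already been proven by Liao \textit{et al.}''. The closest the paper comes to your argument is Lemma~\ref{lem:maximal_expected_score}, which records the denominator identity $\max_{r}\vE^{p_{X}}[g_{\alpha}(X,r)]=\frac{\alpha}{\alpha-1}\norm{p_{X}}_{\alpha}$ (again attributed to \cite[Lemma 1]{8804205}), together with the variational form of $H_{\alpha}^{\text{A}}(X\mid Y)$ in Remark~\ref{rem:VC_Arimoto_Hayashi_ent}. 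Your derivation is precisely the standard Liao \textit{et al.} argument that the paper is citing, so there is no meaningful divergence in approach.
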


\section{Main Results}\label{sec:main_result}
In this section, we address Q1--Q3, which were introduced in Section \ref{sec:intro}. 

\subsection{Representations of $\alpha$-MI using R{\' e}nyi Divergence and Conditional Entropies} \label{ssec:diff_MI}
We start by showing that Arimoto MI and Hayashi MI can be represented in terms of R{\' e}nyi divergence $D_{\alpha}(\cdot || \cdot)$, addressing Q$1$. 
\begin{prop}
\begin{align}
I_{\alpha}^{\text{A}}(X; Y) &= \min_{q_{Y}} D_{\alpha}(p_{X_{\alpha}}p_{Y\mid X} || p_{X_{\alpha}}q_{Y}) \label{eq:Arimoto_MI_divergence_01} \\ 
&= \min_{q_{Y}} \left\{D_{\alpha}(p_{X}p_{Y\mid X} || u_{X}q_{Y}) - D_{\alpha}(p_{X} || u_{X})\right\}, \label{eq:Arimoto_MI_divergence_02} \\ 
I_{\alpha}^{\text{H}}(X; Y) &= D_{\alpha}(p_{X_{\alpha}}p_{Y\mid X} || p_{X_{\alpha}}p_{Y}) \label{eq:Hayashi_MI_divergence_01} \\ 
&= D_{\alpha}(p_{X}p_{Y\mid X} || u_{X}p_{Y}) - D_{\alpha}(p_{X} || u_{X}), \label{eq:Hayashi_MI_divergence_02}
\end{align}
where $u_{X}$ denotes the uniform distribution on $\mathcal{X}$ and $p_{Y}$ denotes the marginal distribution on $Y$. 
\end{prop}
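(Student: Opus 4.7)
The plan is to establish a single ``master identity'' from which all four equalities follow, and then to match the right-hand sides with $I_\alpha^{\text{A}}$ and $I_\alpha^{\text{H}}$ using the definitions together with Proposition \ref{prop:relationship_Arimoto_Sibson}.

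First, I will verify the algebraic identity
\[
D_\alpha(p_X p_{Y\mid X} || u_X q_Y) - D_\alpha(p_X || u_X) = D_\alpha(p_{X_\alpha} p_{Y\mid X} || p_{X_\alpha} q_Y),
\]
valid for every $q_Y \in \Delta_{\mathcal{Y}}$. The check is mechanical: expanding each R\'enyi divergence on the left by definition, the factors $u_X(x)^{1-\alpha} = |\mathcal{X}|^{\alpha-1}$ are constant in $x$ and cancel when one forms the difference of two $\tfrac{1}{\alpha-1}\log(\cdot)$ terms, leaving $\tfrac{1}{\alpha-1}\log\bigl(\sum_{x,y} p_X(x)^\alpha p_{Y\mid X}(y\mid x)^\alpha q_Y(y)^{1-\alpha} \big/ \sum_{x'} p_X(x')^\alpha\bigr)$, which is exactly $D_\alpha(p_{X_\alpha} p_{Y\mid X} || p_{X_\alpha} q_Y)$ by the definition \eqref{eq:alpha_tilted} of $p_{X_\alpha}$. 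Specializing to $q_Y = p_Y$ gives the equality \eqref{eq:Hayashi_MI_divergence_01} = \eqref{eq:Hayashi_MI_divergence_02}; minimizing both sides over $q_Y$ gives \eqref{eq:Arimoto_MI_divergence_01} = \eqref{eq:Arimoto_MI_divergence_02}.

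Second, I will match $\min_{q_Y} D_\alpha(p_{X_\alpha} p_{Y\mid X} || p_{X_\alpha} q_Y)$ with $I_\alpha^{\text{A}}(X;Y)$ by reading it as a Sibson MI with a tilted input. By definition \eqref{eq:Sibson_MI}, this minimum is the Sibson MI of order $\alpha$ associated with prior $p_{X_\alpha}$ and channel $p_{Y\mid X}$; it therefore equals $\tfrac{\alpha}{1-\alpha} E_0\bigl(\tfrac{1}{\alpha}-1,\, p_{X_\alpha}\bigr)$ by \eqref{eq:Sibson_MI_Gallager}, and the latter is $I_\alpha^{\text{A}}(X;Y)$ by \eqref{eq:Arimoto_MI_Gallager}. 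This establishes \eqref{eq:Arimoto_MI_divergence_01}, and combined with the master identity it also yields \eqref{eq:Arimoto_MI_divergence_02}.

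Finally, for \eqref{eq:Hayashi_MI_divergence_01} I will compute $D_\alpha(p_{X_\alpha} p_{Y\mid X} || p_{X_\alpha} p_Y)$ directly and apply Bayes' rule $p_X(x) p_{Y\mid X}(y\mid x) = p_Y(y) p_{X\mid Y}(x\mid y)$ to rewrite $\sum_{x,y} p_X(x)^\alpha p_{Y\mid X}(y\mid x)^\alpha p_Y(y)^{1-\alpha}$ as $\sum_y p_Y(y) \sum_x p_{X\mid Y}(x\mid y)^\alpha$. Together with the normalization $\sum_{x'} p_X(x')^\alpha$ inherited from $p_{X_\alpha}$, the expression splits cleanly into $H_\alpha(X) - H_\alpha^{\text{H}}(X\mid Y)$, which is $I_\alpha^{\text{H}}(X;Y)$ by \eqref{eq:Hayshi_MI}; combined with the master identity this also yields \eqref{eq:Hayashi_MI_divergence_02}. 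The main obstacle is purely bookkeeping: one has to track the tilting normalization $\sum_{x'} p_X(x')^\alpha$ so that it cancels correctly against the $H_\alpha(X)$ term on the entropic side, but no nontrivial analytic step (e.g.\ no explicit Sibson minimizer for the Hayashi case) is required.
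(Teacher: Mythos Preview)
Your proposal is correct and follows essentially the same route as the paper: equation \eqref{eq:Arimoto_MI_divergence_01} is obtained exactly as the paper does, by recognizing $\min_{q_Y} D_\alpha(p_{X_\alpha}p_{Y\mid X}\,\|\,p_{X_\alpha}q_Y)$ as a Sibson MI with tilted input and invoking \eqref{eq:Sibson_MI}, \eqref{eq:Sibson_MI_Gallager}, \eqref{eq:Arimoto_MI_Gallager}. For the remaining three equalities the paper only says ``simple algebraic manipulation,'' and your master identity together with the Bayes-rule computation for the Hayashi case is precisely a clean way to carry that out.
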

\begin{proof}
Eq. \eqref{eq:Arimoto_MI_divergence_01} follows directly from Eqs. \eqref{eq:Sibson_MI}, \eqref{eq:Sibson_MI_Gallager}, and \eqref{eq:Arimoto_MI_Gallager}.
Eqs. \eqref{eq:Arimoto_MI_divergence_02}, \eqref{eq:Hayashi_MI_divergence_01}, and \eqref{eq:Hayashi_MI_divergence_02} are derived by simple algebraic manipulation.
\end{proof}

To address Q2, we present the differential representations of Sibson MI, Augustin--Csisz{\' a}r MI, and Lapidoth--Pfister MI using conditional R{\' e}nyi entropies, a result consistent with  Proposition \ref{prop:X_equal_Y}.

\begin{theorem} \label{thm:alpha_MI_diff}
For $\alpha\in (0, 1)\cup (1, \infty)$, 
\begin{align}
I_{\alpha}^{\text{S}}(X; Y) &= H_{\frac{1}{\alpha}}(X) - H_{\alpha}^{\text{S}}(X\mid Y), \label{eq:Sibson_diff_expression}\\ 
I_{\alpha}^{\text{C}}(X; Y) &= H(X) - H_{\alpha}^{\text{C}}(X\mid Y), \label{eq:AC_diff_expression}
\end{align}
where 
\begin{align}
&H_{\alpha}^{\text{S}}(X|Y) := \min_{r_{X\mid Y}}\frac{\alpha}{1-\alpha}\notag \\ 
&\times \log \sum_{x,y}p_{X_{\frac{1}{\alpha}}}(x)p_{Y\mid X}(y | x)r_{X\mid Y}(x | y)^{1-\frac{1}{\alpha}}, \label{eq:Sibson_cond_renyi_ent} \\ 
&H_{\alpha}^{\text{C}}(X|Y) := \min_{r_{X\mid Y}}\frac{\alpha}{1-\alpha} \notag \\ 
&\times \sum_{x}p_{X}(x)\log \sum_{y}p_{Y\mid X}(y|x)r_{X\mid Y}(x|y)^{1-\frac{1}{\alpha}}. \label{eq:AC_cond_renyi_ent}
\end{align}
For $\alpha\in (1/2, 1) \cup (1, \infty)$, 
\begin{align}
I_{\alpha}^{\text{LP}}(X; Y) &= H_{\frac{\alpha}{2\alpha-1}}(X) - H_{\alpha}^{\text{LP}}(X\mid Y), \label{eq:LP_cond_renyi_ent}
\end{align}
where 
\begin{align}
&H_{\alpha}^{\text{LP}}(X|Y) := \min_{r_{X\mid Y}}\frac{2\alpha-1}{1-\alpha} \notag \\ 
&\times \log \sum_{x}p_{X_{\frac{\alpha}{2\alpha-1}}}(x)\left( \sum_{y}p_{Y\mid X}(y|x)r_{X\mid Y}(x|y)^{1-\frac{1}{\alpha}} \right)^{\frac{\alpha}{2\alpha-1}}.
\end{align}

\end{theorem}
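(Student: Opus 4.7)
I will derive all three differential identities from a single H{\"o}lder-type variational lemma: for any $\alpha\in(0,1)\cup(1,\infty)$ and nonnegative $f\colon\mathcal{X}\to[0,\infty)$,
\begin{align*}
\frac{1}{1-\alpha}\log\sum_{x}f(x)^{\alpha} = \min_{r\in\Delta_{\mathcal{X}}}\frac{\alpha}{1-\alpha}\log\sum_{x}f(x)r(x)^{1-1/\alpha},
\end{align*}
with the extremum attained at $r^{*}(x)\propto f(x)^{\alpha}$ (by a standard Lagrange-multiplier or H{\"o}lder computation, uniform in the two $\alpha$-regimes once the sign of $\alpha/(1-\alpha)$ is tracked). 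In each of the three cases this lemma will introduce a reverse channel $r_{X\mid Y}$, and the relevant $\alpha$-tilted distribution will arise by factoring a scalar of the form $\sum_{x}p_{X}(x)^{*}$ out of a $\log$ so that it becomes a R{\'e}nyi entropy.

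\emph{Sibson.} First solve the $q_{Y}$-minimization in the definition of $I_{\alpha}^{\text{S}}$ explicitly to reach the Gallager form $\frac{\alpha}{\alpha-1}\log\sum_{y}(\sum_{x}p_{X}(x)p_{Y\mid X}(y\mid x)^{\alpha})^{1/\alpha}$. Apply the lemma to the inner $\alpha$-norm with $f(x,y)=p_{X}(x)^{1/\alpha}p_{Y\mid X}(y\mid x)$; because the optimizations over $r(\cdot\mid y)$ for different $y$ are independent, they combine into a single reverse channel $r_{X\mid Y}$. Factor $p_{X}(x)^{1/\alpha}=S\cdot p_{X_{1/\alpha}}(x)$ with $S:=\sum_{x'}p_{X}(x')^{1/\alpha}$: the scalar contributes $\frac{\alpha}{\alpha-1}\log S = H_{1/\alpha}(X)$, and the residual minimization matches exactly $-H_{\alpha}^{\text{S}}(X\mid Y)$.

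\emph{Augustin--Csisz{\'a}r.} Rearrange the target as $H(X)-H_{\alpha}^{\text{C}}(X\mid Y)=\max_{r_{X\mid Y}}G(r)$, where
\begin{align*}
G(r):=\sum_{x}p_{X}(x)\frac{\alpha}{\alpha-1}\log\sum_{y}p_{Y\mid X}(y\mid x)(r(x\mid y)/p_{X}(x))^{1-1/\alpha},
\end{align*}
and establish matching bounds. For the upper bound $G(r)\leq I_{\alpha}^{\text{C}}(X;Y)$, apply H{\"o}lder with exponents $(\alpha,\alpha/(\alpha-1))$ against an arbitrary reference $q_{Y}$; this exposes $D_{\alpha}(p_{Y\mid X}(\cdot\mid x)\| q_{Y})$ and leaves a residual $\log\sum_{y}q_{Y}(y)r(x\mid y)-\log p_{X}(x)$ per $x$. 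After weighting by $p_{X}(x)$ and summing, the residual equals $H(X)+\sum_{x}p_{X}(x)\log J(x)$, where $J(x):=\sum_{y}q_{Y}(y)r(x\mid y)$ is a valid probability distribution on $\mathcal{X}$; Gibbs' inequality $\sum_{x}p_{X}(x)\log J(x)\leq -H(X)$ makes the residual vanish, and taking the infimum over $q_{Y}$ yields the bound. Equality holds at $r^{*}(x\mid y)=p_{X}(x)(p_{Y\mid X}(y\mid x)/q_{Y}^{*}(y))^{\alpha}/Z_{x}^{*}$, where $q_{Y}^{*}$ is Augustin's fixed-point minimizer of $I_{\alpha}^{\text{C}}$ (i.e., $q_{Y}^{*}(y)^{\alpha}=\sum_{x}p_{X}(x)p_{Y\mid X}(y\mid x)^{\alpha}/Z_{x}^{*}$ with $\sum_{y}q_{Y}^{*}(y)=1$); direct substitution gives $G(r^{*})=\sum_{x}p_{X}(x)D_{\alpha}(p_{Y\mid X}(\cdot\mid x)\| q_{Y}^{*})=I_{\alpha}^{\text{C}}(X;Y)$.

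\emph{Lapidoth--Pfister.} Solve the inner $q_{X}$-minimization in closed form first (mirroring Sibson's reduction) to arrive at the intermediate form $\frac{\alpha}{\alpha-1}\log\sum_{x}p_{X}(x)Z_{x}(q_{Y})^{1/\alpha}$ with $Z_{x}(q_{Y})=\sum_{y}p_{Y\mid X}(y\mid x)^{\alpha}q_{Y}(y)^{1-\alpha}$. Then apply the variational lemma to the inner $1/\alpha$-power to introduce the reverse channel $r_{X\mid Y}$, and handle the remaining $q_{Y}$-minimization via the H{\"o}lder--Gibbs sandwich from the AC case; the normalizing scalar now carries the exponent $\alpha/(2\alpha-1)$, producing both $p_{X_{\alpha/(2\alpha-1)}}$ and $H_{\alpha/(2\alpha-1)}(X)$, with the restriction $\alpha>1/2$ entering precisely here to keep this exponent positive. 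The main obstacle across the AC and LP cases is verifying tightness of the H{\"o}lder--Gibbs upper bound: one must use Augustin's fixed-point characterization of $q_{Y}^{*}$ to exhibit the extremal $r^{*}$ explicitly and check, by direct substitution, that both the H{\"o}lder step and the Gibbs step are simultaneously saturated at $r^{*}$, which reduces to the (nontrivial) identity that the $Y$-marginal of the joint $p_{X}(x)p_{Y\mid X}(y\mid x)^{\alpha}q_{Y}^{*}(y)^{1-\alpha}/Z_{x}^{*}$ coincides with $q_{Y}^{*}(y)$ itself.
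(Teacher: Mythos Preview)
Your Sibson argument is essentially the paper's: both routes reduce to the known variational identity $H_{\alpha}^{\text{A}}(X\mid Y)=\min_{r_{X\mid Y}}\frac{\alpha}{1-\alpha}\log\vE^{p_{X}p_{Y\mid X}}[r_{X\mid Y}(X\mid Y)^{1-1/\alpha}]$ (your H{\"o}lder lemma applied fiberwise over $y$), with the paper simply phrasing it as $I_{\alpha}^{\text{S}}(p_{X},p_{Y\mid X})=I_{\alpha}^{\text{A}}(p_{X_{1/\alpha}},p_{Y\mid X})$ together with $H_{\alpha}(p_{X_{1/\alpha}})=H_{1/\alpha}(p_{X})$.

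For Augustin--Csisz{\'a}r your approach is genuinely different and correct. The paper proves the variational formula by invoking Shayevitz's representation $I_{\alpha}^{\text{C}}=\min_{\tilde{q}_{Y\mid X}}\{I(p_{X},\tilde{q}_{Y\mid X})+\frac{\alpha}{1-\alpha}D(p_{X}\tilde{q}_{Y\mid X}\,\|\,p_{X}p_{Y\mid X})\}$ (with $\min$ replaced by $\max$ for $\alpha>1$), inserting the reverse-channel variational form of Shannon MI, and exchanging the two optimizations via Sion's minimax theorem; the inner minimum over $\tilde{q}_{Y\mid X}$ then collapses in closed form. Your route---H{\"o}lder against an arbitrary $q_{Y}$, Gibbs on the residual $J(x)=\sum_{y}q_{Y}(y)r(x\mid y)$, and saturation at the Augustin fixed point---avoids both the auxiliary channel $\tilde{q}_{Y\mid X}$ and the minimax step, at the price of having to invoke (or rederive) the fixed-point characterization of the Augustin mean. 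The paper's method is more modular because it recycles off-the-shelf representations; yours is more self-contained.

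The Lapidoth--Pfister sketch has a genuine gap. After solving out $q_{X}$ you correctly reach $\min_{q_{Y}}\frac{\alpha}{\alpha-1}\log\sum_{x}p_{X}(x)Z_{x}(q_{Y})^{1/\alpha}$, but ``apply the variational lemma to the inner $1/\alpha$-power'' does not introduce a reverse channel: your lemma, applied to $Z_{x}(q_{Y})^{1/\alpha}=(\sum_{y}f_{x}(y)^{\alpha})^{1/\alpha}$, yields a distribution on $\mathcal{Y}$ for each $x$, not an $r_{X\mid Y}$. Moreover, the Gibbs step from the AC case does not transplant; what is actually needed for weak duality is a \emph{second} H{\"o}lder over $x$ with the conjugate pair $(\tfrac{2\alpha-1}{\alpha},\tfrac{2\alpha-1}{\alpha-1})$ together with $\sum_{x}J(x)=1$---this is where the exponent $\alpha/(2\alpha-1)$ and the restriction $\alpha>1/2$ genuinely enter. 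Finally, tightness is \emph{not} governed by Augustin's fixed point: the optimal $q_{Y}^{*}$ for the LP inner problem satisfies $q_{Y}^{*}(y)^{\alpha}\propto\sum_{x}p_{X}(x)Z_{x}^{*\,(1-\alpha)/\alpha}p_{Y\mid X}(y\mid x)^{\alpha}$, carrying extra weights $Z_{x}^{*\,(1-\alpha)/\alpha}$ absent from the AC fixed point, so the ``nontrivial identity'' you flag at the end is the wrong one. The paper sidesteps all of this by starting instead from the variational form $I_{\alpha}^{\text{LP}}=\min_{\tilde{q}_{X,Y}}\{\frac{\alpha}{1-\alpha}D(\tilde{q}_{X,Y}\,\|\,p_{X,Y})+I(\tilde{q}_{X},\tilde{q}_{Y\mid X})\}$ (again with $\min\leftrightarrow\max$ for $\alpha>1$), swapping via Sion's minimax, and then eliminating $\tilde{q}_{Y\mid X}$ and $\tilde{q}_{X}$ in closed form; this route never needs the LP fixed point explicitly.
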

\begin{proof}
See Appendix \ref{proof:alpha_MI_diff}.
\end{proof}

\begin{remark} \label{rem:VC_Arimoto_Hayashi_ent}
Similarly, $H_{\alpha}^{\text{A}}(X|Y)$ and $H_{\alpha}^{\text{H}}(X|Y)$ can be represented using a reverse channel $r_{X\mid Y}$ as follows \cite[p.141]{BN01990060en},\cite[Eq.(51)]{AkiraKAMATSUKA20252024TAP0010}.
\begin{align}
&H_{\alpha}^{\text{A}}(X|Y) =\min_{r_{X\mid Y}} \frac{\alpha}{1-\alpha} \log \sum_{x,y}p_{X}(x)p_{Y\mid X}(y|x)r_{X\mid Y}(x|y)^{1-\frac{1}{\alpha}}, \label{eq:Arimoto_cont_renyi_ent} \\ 
&H_{\alpha}^{\text{H}}(X|Y) = \min_{r_{X\mid Y}} \frac{1}{1-\alpha}\notag \\ 
&\times \log \sum_{x,y} \left( \alpha r_{X\mid Y}(x\mid y)^{\alpha-1}-(\alpha-1)\norm{r_{X\mid Y}(\cdot \mid y)}_{\alpha}^{\alpha} \right). \label{eq:Hayashieq_cont_renyi_ent}
\end{align}
\end{remark}

\subsection{Novel conditional R{\' enyi} entropies} \label{ssec:novel_reneyi}
To the best of our knowledge, so far, $H_{\alpha}^{\text{A}}(X|Y)$ and $H_{\alpha}^{\text{H}}(X|Y)$ are the only conditional R{\' enyi} entropies that satisfy both CRE, i.e., $H_{\alpha}(X)\geq H_{\alpha}^{(\cdot)}(X|Y)$, and DPI (also known as monotonicity) (see \cite[Table 1]{cryptoeprint:2013/440},\cite[Table 1]{ecea-4-05030}).
From Eqs.\eqref{eq:Sibson_cond_renyi_ent}, \eqref{eq:AC_cond_renyi_ent}, and \eqref{eq:LP_cond_renyi_ent}, we obtain novel conditional R{\' e}nyi entropy-like quantities that satisfy CRE and DPI in the following sense. 

\begin{prop}
Let $\tilde{H}_{\alpha}^{\text{S}}(X|Y):=H_{\frac{1}{\alpha}}^{\text{S}}(X|Y), \tilde{H}_{\alpha}^{\text{LP}}(X|Y):=H_{\frac{\alpha}{2\alpha-1}}^{\text{LP}}(X|Y)$.
\begin{enumerate}
\item Then, the following holds.
\begin{align}
H_{\alpha}(X) &\geq \tilde{H}_{\alpha}^{\text{S}}(X|Y), \qquad \alpha \in (0, 1) \cup (1, \infty), \\ 
H_{\alpha}(X) &\geq \tilde{H}_{\alpha}^{\text{LP}}(X|Y), \quad \alpha \in (1/2, 1)\cup (1, \infty) , \\ 
H(X) &\geq H_{\alpha}^{\text{C}}(X|Y), \qquad \alpha \in (0, 1) \cup (1, \infty). 
\end{align}
\item If $X-Y-Z$ forms a Markov chain, then 
\begin{align}
\tilde{H}_{\alpha}^{\text{S}}(X|Y) &\leq \tilde{H}_{\alpha}^{\text{S}}(X|Z), \quad \alpha \in (0, 1) \cup (1, \infty), \\ 
\tilde{H}_{\alpha}^{\text{LP}}(X|Y) &\leq \tilde{H}_{\alpha}^{\text{LP}}(X|Z), \quad \hspace{-8pt} \alpha \in (1/2, 1) \cup (1, \infty) , \\ 
H_{\alpha}^{\text{C}}(X|Y) &\leq H_{\alpha}^{\text{C}}(X|Z), \quad \alpha \in (0, 1) \cup (1, \infty). 
\end{align}
\end{enumerate}
\end{prop}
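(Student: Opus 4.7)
The plan is to reduce both parts of the proposition to two standard properties of $\alpha$-MI---non-negativity and DPI---via the differential representations in Theorem~\ref{thm:alpha_MI_diff}. The key observation is that each novel conditional entropy can be written as a R\'enyi entropy of $X$ minus the corresponding $\alpha$-MI evaluated at a suitably shifted order, so CRE collapses to $I_{\alpha}^{(\cdot)} \geq 0$ and DPI collapses to the DPI of the associated $\alpha$-MI.

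For part~(1), starting from \eqref{eq:Sibson_diff_expression} and substituting $\alpha \leftarrow 1/\alpha$ yields
\begin{align}
\tilde{H}_{\alpha}^{\text{S}}(X\mid Y) = H_{1/\alpha}^{\text{S}}(X\mid Y) = H_{\alpha}(X) - I_{1/\alpha}^{\text{S}}(X;Y),
\end{align}
so CRE for $\tilde{H}_{\alpha}^{\text{S}}$ is equivalent to $I_{1/\alpha}^{\text{S}}(X;Y) \geq 0$, which is immediate from the non-negativity of R\'enyi divergence together with \eqref{eq:Sibson_MI}. The Augustin--Csisz\'ar case is most direct: \eqref{eq:AC_diff_expression} gives $H(X) - H_{\alpha}^{\text{C}}(X\mid Y) = I_{\alpha}^{\text{C}}(X;Y) \geq 0$. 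For Lapidoth--Pfister, I would first observe that $\alpha \mapsto \alpha/(2\alpha-1)$ is an involution on $(1/2, 1) \cup (1, \infty)$ (a short calculation), so substituting $\alpha \leftarrow \alpha/(2\alpha-1)$ in \eqref{eq:LP_cond_renyi_ent} gives
\begin{align}
\tilde{H}_{\alpha}^{\text{LP}}(X\mid Y) = H_{\alpha}(X) - I_{\alpha/(2\alpha-1)}^{\text{LP}}(X;Y),
\end{align}
and CRE again reduces to non-negativity.

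For part~(2), the same identities yield
\begin{align}
\tilde{H}_{\alpha}^{\text{S}}(X\mid Z) - \tilde{H}_{\alpha}^{\text{S}}(X\mid Y) = I_{1/\alpha}^{\text{S}}(X;Y) - I_{1/\alpha}^{\text{S}}(X;Z),
\end{align}
and analogously for LP and AC, so the claimed monotonicities reduce to DPI of the respective $\alpha$-MI. Each of these DPIs follows from DPI of R\'enyi divergence applied to the variational characterizations \eqref{eq:Sibson_MI}, \eqref{eq:AC_MI}, \eqref{eq:LP_MI}: given a Markov chain $X-Y-Z$ with kernel $p_{Z\mid Y}$, any candidate $q_{Y}$ on $\mathcal{Y}$ induces $q_{Z}$ on $\mathcal{Z}$, and DPI for R\'enyi divergence gives $D_{\alpha}(p_{X}p_{Y\mid X} || p_{X}q_{Y}) \geq D_{\alpha}(p_{X}p_{Z\mid X} || p_{X}q_{Z})$; taking the infimum over $q_{Y}$ and bounding the right-hand side by the infimum over $q_{Z}$ establishes DPI for Sibson MI. The AC case applies DPI termwise inside the expectation in \eqref{eq:AC_MI}, and the LP case replaces both marginals simultaneously.

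The main obstacle is really only bookkeeping: one must verify that the parameter substitutions preserve the admissible ranges (the involution check for LP is the only nontrivial one) and that the induced $q_{Z}$ is always a feasible candidate in each variational problem over $\mathcal{Z}$. Both checks are routine, and everything else follows from the non-negativity and DPI of R\'enyi divergence.
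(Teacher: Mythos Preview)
Your proposal is correct and takes essentially the same approach as the paper: both reduce CRE and DPI of the new conditional entropies to the nonnegativity and DPI of the corresponding $\alpha$-MI via the differential representations of Theorem~\ref{thm:alpha_MI_diff}. The paper's proof is a one-line appeal to these two properties (citing known results), whereas you additionally spell out the parameter-substitution bookkeeping and sketch the DPI argument for the $\alpha$-MIs themselves, but the underlying idea is identical.
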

\begin{proof}
It follows from the nonnegativity and DPI of Sibson MI, Augustin--Csisz{\' a}r MI, and Lapidoth--Pfister MI \cite[Thm 4.1]{esposito2024sibsons},\cite[Eq.(22)]{370121},\cite[Lemma 4]{e21080778}.
\end{proof}


\subsection{Interpretation of $\alpha$-MI as Privacy Leakage Measure} \label{ssec:g_leakage}
In this subsection, we address Q$3$, i.e., we provide novel interpretations of $\alpha$-MI as privacy leakage measures.
To this end, we first introduce the generalized mean and gain functions.

\begin{definition}[Power mean] \label{def:power_mean}
Let $t\in (-\infty, \infty)\setminus \{0\}$. Given a distribution $p_{X}$ and a nonnegative function $f$, the \textit{power mean} (also known as \textit{generalized mean} or \textit{H{\" o}lder mean}) \textit{of order $t$ of $f(X)$} is defined as follows.
\begin{align}
\mathbb{M}_{t}^{p_{X}}[f(X)] &:= \{\vE^{p_{X}}\left[f(X)^{t}\right]\}^{\frac{1}{t}}.
\end{align}
\end{definition}
\begin{remark}
The power mean recovers the expectation $\vE^{p_{X}}[\cdot]$ when $t = 1$ and converges to the geometric mean $\mathbb{G}^{p_{X}}[\cdot]$ as $t\to 0$. 
Thus, the power mean is extended by continuity to $t=0$, denoted by $\mathbb{M}_{0}^{p_{X}}[\cdot] := \mathbb{G}^{p_{X}}[\cdot]$.
\end{remark}

The power mean can be seen as the generalized geometric mean, which is defined as follows.

\begin{definition}[Generalized geometric mean] \label{def:gen_geometic_mean}
Let $q\in (-\infty, \infty)$. Given a distribution $p_{X}$, the \textit{generalized geometric mean of order $q$ of $f(X)$} is defined as follows:
\begin{align}
\mathbb{G}_{q}^{p_{X}}[f(X)] &:= \exp_{q}\left\{\vE^{p_{X}}\left[\ln_{q}f(X)\right]\right\}, 
\end{align}
where $\ln_{q} (\cdot)$ and $\exp_{q}\{\cdot\}$ are $q$-logarithm and $q$-exponential, respectively, defined as follows:
\begin{align}
\ln_{q}x &:= 
\begin{cases}
\log x, & q=1, \\ 
\frac{x^{1-q}-1}{1-q}, & q\neq 1, 
\end{cases} \\ 
\exp_{q}\{x\} &:= 
\begin{cases}
\exp\{x\}, & q=1, \\ 
[1+(1-q)x]^{\frac{1}{1-q}}, & q\neq 1.
\end{cases}
\end{align}
\end{definition}

\begin{lemma} \label{lem:equivalent_power_gen_geometic_mean}
Let $q\in (-\infty, \infty)$. Then, 
\begin{align}
\mathbb{M}_{1-q}^{p_{X}}[f(X)] = \mathbb{G}_{q}^{p_{X}}[f(X)].
\end{align}
\end{lemma}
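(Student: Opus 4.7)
The plan is to prove the identity by direct calculation, splitting into the two cases $q=1$ and $q\neq 1$ that match the piecewise definitions of $\ln_q$ and $\exp_q$.

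For $q = 1$, I would observe that $1-q = 0$, so by the remark following Definition \ref{def:power_mean} the left-hand side equals $\mathbb{G}^{p_X}[f(X)] = \exp\{\mathbb{E}^{p_X}[\log f(X)]\}$. The right-hand side with $q=1$ is $\exp\{\mathbb{E}^{p_X}[\log f(X)]\}$ by Definition \ref{def:gen_geometic_mean}, so the two sides coincide.

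For $q \neq 1$, the strategy is to plug in the explicit formulas and use linearity of expectation. First I compute
\begin{align*}
\mathbb{E}^{p_X}[\ln_q f(X)] &= \mathbb{E}^{p_X}\!\left[\frac{f(X)^{1-q}-1}{1-q}\right] = \frac{\mathbb{E}^{p_X}[f(X)^{1-q}] - 1}{1-q}.
\end{align*}
Then applying $\exp_q$ gives
\begin{align*}
\exp_q\!\left\{\frac{\mathbb{E}^{p_X}[f(X)^{1-q}] - 1}{1-q}\right\} &= \bigl\{\mathbb{E}^{p_X}[f(X)^{1-q}]\bigr\}^{1/(1-q)},
\end{align*}
where the $(1-q)$ factor inside $[1+(1-q)x]^{1/(1-q)}$ cancels the denominator, and the $+1$ cancels the $-1$. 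This last expression is precisely $\mathbb{M}_{1-q}^{p_X}[f(X)]$ by Definition \ref{def:power_mean}.

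There is no real obstacle here: the identity is essentially a tautology once the $q$-exponential and $q$-logarithm are expanded, since $\exp_q$ is constructed as the inverse of $\ln_q$ in exactly the way needed to turn an arithmetic mean of $q$-logarithms into a power mean. The only item requiring mild care is checking that the $q=1$ limit of the $q\neq 1$ formula agrees with the geometric-mean case, which follows from the standard limit $\lim_{q\to 1}(x^{1-q}-1)/(1-q) = \log x$ and continuity of $\exp$.
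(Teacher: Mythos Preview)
Your proof is correct and essentially the same direct computation as the paper's, just run in the reverse direction: the paper starts from $\{\mathbb{E}^{p_X}[f(X)^{1-q}]\}^{1/(1-q)}$ and rewrites it as $\exp_q\{\mathbb{E}^{p_X}[\ln_q f(X)]\}$, while you start from $\mathbb{E}^{p_X}[\ln_q f(X)]$ and apply $\exp_q$ to recover the power mean. If anything, your version is slightly more careful, since you treat the $q=1$ case explicitly whereas the paper's displayed chain tacitly assumes $q\neq 1$.
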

\begin{proof}
\begin{align}
\mathbb{M}_{1-q}^{p_{X}}[f(X)] &:= \{\vE^{p_{X}}\left[f(X)^{1-q}\right]\}^{\frac{1}{1-q}} \\ 
&= \left\{1 + (1-q) \vE^{p_{X}}\left[\frac{f(X)^{1-q}-1}{1-q}\right]\right\}^{\frac{1}{1-q}} \\ 
&= \mathbb{G}_{q}^{p_{X}}[f(X)].
\end{align}
\end{proof}

To interpret all types of $\alpha$-MI in the context of privacy-preserving data publishing problems, we assume guessing adversaries with the following gain functions in addition to $\alpha$-score $g_{\alpha}(x,r)$ defined in \eqref{eq:alpha_score}.
\begin{definition}[Pseudospherical score, Power score]
Let $\alpha\in (0, 1)\cup (1, \infty)$. The \textit{pseudospherical score} $g_{\alpha, \text{PS}}(x, r)$ and \textit{power score} (also known as \textit{Tsallis score} \cite{Dawid:2014ua}) $g_{\alpha, \text{PW}}(x, r)$ are defined as follows.
\begin{align}
g_{\alpha, \text{PS}}(x, r) &= g_{\alpha}(x, r^{(\alpha)}) = \frac{\alpha}{\alpha-1}\cdot \left(\frac{r(x)}{\norm{r}_{\alpha}} \right)^{\alpha-1}, \\
g_{\alpha, \text{PW}}(x, r) &= \frac{\alpha}{\alpha-1}\cdot r(x)^{\alpha-1} - \norm{r}_{\alpha}^{\alpha}.
\end{align}
\end{definition}
\begin{remark}
These scores, as well as  $g_{1}(x, r)$, are known as the \textit{proper scoring rule} (PSR)\footnote{A gain function $g(x, r)$ is the proper scoring rule if the expected gain is maximized at $r=p_{X}$, i.e.,  for all $r\in \Delta_{\mathcal{X}}$, $\vE^{p_{X}}[g(X, p_{X})]\geq \vE^{p_{X}}[g(X, r)]$. } \cite{Dawid:2014ua,doi:10.1198/016214506000001437}. 
Therefore, the adversary that maximizes the expected gain of these gain functions estimates $X$ using the posterior distribution $r_{\hat{X}\mid Y} = p_{X\mid Y}$ for any $\alpha\in (0, 1)\cup (1, \infty)$.
Meanwhile, the adversary that maximizes the expected gain of $\alpha$-score estimates $X$ using the $\alpha$-tilted distribution of $p_{X\mid Y}$, i.e., $r_{\hat{X}\mid Y}(x|y) = \frac{p_{X\mid Y}(x|y)^{\alpha}}{\sum_{x}p_{X\mid Y}(x|y)^{\alpha}}$ \cite[Lemma 1]{8804205}.
\end{remark}

Now, we present the interpretations of $\alpha$-MI as privacy metrics. 
Every $\alpha$-MI can be interpreted as privacy metrics defined as the multiplicative increase of the maximal generalized mean of gain of a guessing adversary upon observing published data $Y$, as follows.
\begin{theorem} \label{thm:interpetations_alpha_MI}
For $\alpha \in (0, 1)\cup (1, \infty)$, 
\begin{align}
&I_{\alpha}^{\text{A}}(X; Y) \notag \\ 
&= \frac{\alpha}{\alpha-1} \log \frac{\max_{r_{\hat{X}\mid Y}}\vE^{p_{X}p_{Y\mid X}}[g_{\alpha}(X, r_{\hat{X}\mid Y}(\cdot \mid Y))]}{\max_{r_{\hat{X}}}\vE^{p_{X}}[g_{\alpha}(X, r_{\hat{X}})]} \label{eq:Arimoto_privacy_01} \\ 
&= \frac{\alpha}{\alpha-1} \log \frac{\max_{r_{\hat{X}\mid Y}}\vE^{p_{X}p_{Y\mid X}}[g_{\alpha, \text{PS}}(X, r_{\hat{X}\mid Y}(\cdot \mid Y))]}{\max_{r_{\hat{X}}}\vE^{p_{X}}[g_{\alpha, \text{PS}}(X, r_{\hat{X}})]} \label{eq:Arimoto_privacy_02} \\ 
&= \log \frac{\max_{r_{\hat{X}\mid Y}}\mathbb{M}_{1-\frac{1}{\alpha}}^{p_{X}p_{Y\mid X}}[g_{\infty}(X, r_{\hat{X}\mid Y}(\cdot\mid Y))]}{\max_{r_{\hat{X}}}\mathbb{M}_{1-\frac{1}{\alpha}}^{p_{X}}[g_{\infty}(X, r_{\hat{X}})]} \label{eq:Arimoto_privacy_03} \\ 
&= \log \frac{\max_{r_{\hat{X}\mid Y}}\mathbb{G}_{\frac{1}{\alpha}}^{p_{X}p_{Y\mid X}}[g_{\infty}(X, r_{\hat{X}\mid Y}(\cdot\mid Y))]}{\max_{r_{\hat{X}}}\mathbb{G}_{\frac{1}{\alpha}}^{p_{X}}[g_{\infty}(X, r_{\hat{X}})]}, \label{eq:Arimoto_privacy_04} \\ 
&I_{\alpha}^{\text{S}}(X; Y) \notag \\ 
&= \frac{\alpha}{\alpha-1} \log \frac{\max_{r_{\hat{X}\mid Y}}\vE^{p_{X_{\frac{1}{\alpha}}}p_{Y\mid X}}[g_{\alpha}(X, r_{\hat{X}\mid Y}(\cdot \mid Y))]}{\max_{r_{\hat{X}}}\vE^{p_{X_{\frac{1}{\alpha}}}}[g_{\alpha}(X, r_{\hat{X}})]} \label{eq:Sibson_privacy_01} \\ 
&= \frac{\alpha}{\alpha-1} \log \frac{\max_{r_{\hat{X}\mid Y}}\vE^{p_{X_{\frac{1}{\alpha}}}p_{Y\mid X}}[g_{\alpha, \text{PS}}(X, r_{\hat{X}\mid Y}(\cdot \mid Y))]}{\max_{r_{\hat{X}}}\vE^{p_{X_{\frac{1}{\alpha}}}}[g_{\alpha, \text{PS}}(X, r_{\hat{X}})]} \label{eq:Sibson_privacy_02} \\ 
&= \log \frac{\max_{r_{\hat{X}\mid Y}}\mathbb{M}_{1-\frac{1}{\alpha}}^{p_{X_{\frac{1}{\alpha}}}p_{Y\mid X}}[g_{\infty}(X, r_{\hat{X}\mid Y}(\cdot\mid Y))]}{\max_{r_{\hat{X}}}\mathbb{M}_{1-\frac{1}{\alpha}}^{p_{X_{\frac{1}{\alpha}}}}[g_{\infty}(X, r_{\hat{X}})]} \label{eq:Sibson_privacy_03} \\ 
&= \log \frac{\max_{r_{\hat{X}\mid Y}}\mathbb{G}_{\frac{1}{\alpha}}^{p_{X_{\frac{1}{\alpha}}}p_{Y\mid X}}[g_{\infty}(X, r_{\hat{X}\mid Y}(\cdot\mid Y))]}{\max_{r_{\hat{X}}}\mathbb{G}_{\frac{1}{\alpha}}^{p_{X_{\frac{1}{\alpha}}}}[g_{\infty}(X, r_{\hat{X}})]}, \label{eq:Sibson_privacy_04} \\ 
&I_{\alpha}^{\text{C}}(X; Y) \notag \\ 
&= \frac{\alpha}{\alpha-1} \log \frac{\max_{r_{\hat{X}\mid Y}}\mathbb{G}^{p_{X}}\left[\vE^{p_{Y\mid X}}\left[g_{\alpha}(X, r_{\hat{X}\mid Y}(\cdot \mid Y)) \relmiddle{|} X\right]\right]}{\max_{r_{\hat{X}}}\mathbb{G}^{p_{X}}[g_{\alpha}(X, r_{\hat{X}})]} \label{eq:AC_privacy_01} \\
&= \frac{\alpha}{\alpha-1} \log \frac{\max_{r_{\hat{X}\mid Y}}\mathbb{G}^{p_{X}}\left[\vE^{p_{Y\mid X}}\left[g_{\alpha, \text{PS}}(X, r_{\hat{X}\mid Y}(\cdot \mid Y)) \relmiddle{|} X\right]\right]}{\max_{r_{\hat{X}}}\mathbb{G}^{p_{X}}[g_{\alpha, \text{PS}}(X, r_{\hat{X}})]} \label{eq:AC_privacy_02} \\
&= \log \frac{\max_{r_{\hat{X}\mid Y}}\mathbb{M}_{0}^{p_{X}}\left[\mathbb{M}_{1-\frac{1}{\alpha}}^{p_{Y\mid X}}\left[g_{\infty}(X, r_{\hat{X}\mid Y}(\cdot \mid Y))\relmiddle{|}X\right]\right]}{\max_{r_{\hat{X}}}\mathbb{G}^{p_{X}}[g_{\infty}(X, r_{\hat{X}})]} \label{eq:AC_privacy_03} \\ 
&= \log \frac{\max_{r_{\hat{X}\mid Y}}\mathbb{G}^{p_{X}}\left[\mathbb{G}_{\frac{1}{\alpha}}^{p_{Y\mid X}}\left[g_{\infty}(X, r_{\hat{X}\mid Y}(\cdot \mid Y))\relmiddle{|}X\right]\right]}{\max_{r_{\hat{X}}}\mathbb{G}^{p_{X}}[g_{\infty}(X, r_{\hat{X}})]}, \label{eq:AC_privacy_04} \\ 
&I_{\alpha}^{\text{H}}(X; Y) \notag \\ 
&= \frac{1}{\alpha-1} \log \frac{\max_{r_{\hat{X}\mid Y}} \vE^{p_{X}p_{Y\mid X}}\left[g_{\alpha, \text{PW}}(X, r_{\hat{X}\mid Y}(\cdot \mid Y))\right]}{\max_{r_{\hat{X}}}\vE^{p_{X}}[g_{\alpha, \text{PW}}(X, r_{\hat{X}})]}. \label{eq:Hayashi_privacy_01}
\end{align}
For $\alpha \in (1/2, 1) \cup (1, \infty)$, 
\begin{align}
&I_{\alpha}^{\text{LP}}(X; Y) = \frac{\alpha}{\alpha-1} \notag \\ 
&\times \log  \frac{\max_{r_{\hat{X}\mid Y}}\mathbb{M}_{\frac{\alpha}{2\alpha-1}}^{p_{X_{\frac{\alpha}{2\alpha-1}}}}\left[\vE^{p_{Y\mid X}}\left[g_{\alpha}(X, r_{\hat{X}\mid Y}(\cdot | Y)) \relmiddle{|}X\right]\right]}{\max_{r_{\hat{X}}}\mathbb{M}_{\frac{\alpha}{2\alpha-1}}^{p_{X_{\frac{\alpha}{2\alpha-1}}}}[g_{\alpha}(X, r_{\hat{X}})]} \label{eq:LP_privacy_01} \\
&= \frac{\alpha}{\alpha-1} \notag \\ 
&\times \log  \frac{\max_{r_{\hat{X}\mid Y}}\mathbb{M}_{\frac{\alpha}{2\alpha-1}}^{p_{X_{\frac{\alpha}{2\alpha-1}}}}\left[\vE^{p_{Y\mid X}}\left[g_{\alpha, \text{PS}}(X, r_{\hat{X}\mid Y}(\cdot | Y)) \relmiddle{|}X\right]\right]}{\max_{r_{\hat{X}}}\mathbb{M}_{\frac{\alpha}{2\alpha-1}}^{p_{X_{\frac{\alpha}{2\alpha-1}}}}[g_{\alpha, \text{PS}}(X, r_{\hat{X}})]}. \label{eq:LP_privacy_02}
\end{align}
\end{theorem}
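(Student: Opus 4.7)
The plan is to reduce every equation in the theorem to a common computational template and then invoke the differential representations already in the paper. The two pivotal one-shot facts I would prove first are: (i) for any $\alpha\in(0,1)\cup(1,\infty)$ and any distribution $p$ on $\mathcal{X}$, $\max_{r}\vE^{p}[g_{\alpha}(X,r)] = \tfrac{\alpha}{\alpha-1}\norm{p}_{\alpha}$ with argmax $r^{\ast}=p^{(\alpha)}$ (Lagrangian calculation, valid for both sign regimes of $\tfrac{\alpha}{\alpha-1}$); and (ii) for any joint $q_{X,Y}$, $\max_{r_{\hat{X}\mid Y}}\vE^{q_{X,Y}}[g_{\alpha}(X,r_{\hat{X}\mid Y}(\cdot\mid Y))] = \tfrac{\alpha}{\alpha-1}\sum_{y}(\sum_{x}q_{X,Y}(x,y)^{\alpha})^{1/\alpha}$ with $r_{\hat{X}\mid Y}^{\ast}(x\mid y)\propto q_{X,Y}(x,y)^{\alpha}$, since the inner max decouples over $y$. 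Three easy add-ons would then finish the algebraic scaffolding: the bijection $r\mapsto r^{(\alpha)}$ of $\Delta_{\mathcal{X}}$ gives $\max_{r}\vE^{p}[g_{\alpha,\text{PS}}(X,r)] = \max_{r}\vE^{p}[g_{\alpha}(X,r)]$; the proper-scoring-rule property of $g_{\alpha,\text{PW}}$ gives optimum at $r^{\ast}=p$; and the pointwise identity $g_{\alpha}(x,r) = \tfrac{\alpha}{\alpha-1} g_{\infty}(x,r)^{1-1/\alpha}$ converts every $\vE[g_{\alpha}]$ into $\tfrac{\alpha}{\alpha-1}\mathbb{M}_{1-1/\alpha}[g_{\infty}]^{1-1/\alpha}$.

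With this toolkit, \eqref{eq:Arimoto_privacy_01} is the statement just attributed to Liao \emph{et al.}, and \eqref{eq:Sibson_privacy_01} follows by applying (ii) with the prior replaced by $p_{X_{1/\alpha}}$: using $p_{X_{1/\alpha}}(x)^{\alpha}=p_{X}(x)/(\sum_{x'}p_{X}(x')^{1/\alpha})^{\alpha}$, the numerator becomes $\sum_{y}(\sum_{x}p_{X}(x)p_{Y\mid X}(y\mid x)^{\alpha})^{1/\alpha}$ divided by $\sum_{x'}p_{X}(x')^{1/\alpha}$, the denominator becomes $1/\sum_{x'}p_{X}(x')^{1/\alpha}$, and the factor $\sum_{x'}p_{X}(x')^{1/\alpha}$ cancels in the ratio; the surviving log is exactly $\tfrac{\alpha-1}{\alpha}I_{\alpha}^{\text{S}}(X;Y)$ by the Gallager form \eqref{eq:Sibson_MI_Gallager}. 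For \eqref{eq:Hayashi_privacy_01} I would use that $g_{\alpha,\text{PW}}$ is a PSR so the optimal channel is the true posterior, and the ratio simplifies to $\sum_{y}p_{Y}(y)\sum_{x}p_{X\mid Y}(x\mid y)^{\alpha} / \sum_{x}p_{X}(x)^{\alpha}$; the $\tfrac{1}{\alpha-1}\log$ prefactor turns this into $H_{\alpha}(X)-H_{\alpha}^{\text{H}}(X\mid Y)=I_{\alpha}^{\text{H}}(X;Y)$.

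For the Augustin--Csisz\'ar equation \eqref{eq:AC_privacy_01} and the Lapidoth--Pfister equation \eqref{eq:LP_privacy_01} I would not compute the optima from scratch; instead I would match, after factoring out the $\tfrac{\alpha}{\alpha-1}$ prefactor via $\mathbb{G}^{p_{X}}$ and the prior $p_{X_{\alpha/(2\alpha-1)}}$ respectively, the inner expressions against the reverse-channel optimization already present in \eqref{eq:AC_cond_renyi_ent} and its LP analogue in Theorem~\ref{thm:alpha_MI_diff}. The maxima thus collapse to $\exp(\tfrac{\alpha-1}{\alpha}H_{\alpha}^{\text{C}}(X\mid Y))$ and $\exp(\tfrac{\alpha-1}{\alpha}H_{\alpha}^{\text{LP}}(X\mid Y))$ (times prior-only normalizers that cancel), and Theorem~\ref{thm:alpha_MI_diff} together with Proposition~\ref{prop:X_equal_Y} closes the two cases. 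The pseudospherical variants \eqref{eq:Arimoto_privacy_02}, \eqref{eq:Sibson_privacy_02}, \eqref{eq:AC_privacy_02}, \eqref{eq:LP_privacy_02} then come for free from the bijection observation, and the power-mean / generalized-geometric-mean variants \eqref{eq:Arimoto_privacy_03}--\eqref{eq:Arimoto_privacy_04}, \eqref{eq:Sibson_privacy_03}--\eqref{eq:Sibson_privacy_04}, \eqref{eq:AC_privacy_03}--\eqref{eq:AC_privacy_04} come from the identity $g_{\alpha}=\tfrac{\alpha}{\alpha-1}g_{\infty}^{1-1/\alpha}$ (the constant cancels in the ratio, the exponent $1-1/\alpha$ cancels the external $\tfrac{\alpha}{\alpha-1}$) together with Lemma~\ref{lem:equivalent_power_gen_geometic_mean} at $q=1/\alpha$ and, for the AC case, at $q=1$ for the outer geometric mean.

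The main obstacle will be the AC/LP pair, where the nested operations (inner $\vE^{p_{Y\mid X}}$ then outer $\mathbb{G}^{p_{X}}$ or $\mathbb{M}_{\alpha/(2\alpha-1)}^{p_{X_{\alpha/(2\alpha-1)}}}$) do not decouple over $y$, so I cannot apply (ii) pointwise and must instead recognize the inner expression as the exact object minimized in \eqref{eq:AC_cond_renyi_ent} and its LP analogue. A secondary nuisance is sign bookkeeping for $\alpha\in(0,1)$, where $\tfrac{\alpha}{\alpha-1}<0$ and $g_{\alpha}$ is negative: one has to verify that pulling the $\tfrac{\alpha}{\alpha-1}$ out of $\mathbb{G}^{p_{X}}$ and out of $\mathbb{M}_{t}^{p_{X}}$ only produces factors that cancel in the numerator/denominator ratio, so the final expression is well-defined and real.
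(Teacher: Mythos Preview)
Your proposal is correct and follows essentially the same route as the paper's proof: both establish the key maximum identities for the gain functions (your items (i), (ii), and the PSR observation are the content of the paper's Lemma~\ref{lem:maximal_expected_score}), derive the Sibson case from the Arimoto case via the tilted prior $p_{X_{1/\alpha}}$, handle the Augustin--Csisz\'ar and Lapidoth--Pfister cases by recognizing the inner reverse-channel optimization as exactly the conditional entropies of Theorem~\ref{thm:alpha_MI_diff}, and obtain the pseudospherical and power-mean/generalized-geometric-mean variants from the bijection $r\mapsto r^{(\alpha)}$ and Lemma~\ref{lem:equivalent_power_gen_geometic_mean}. Your write-up is somewhat more explicit in the algebra (e.g., the cancellation of $\sum_{x'}p_X(x')^{1/\alpha}$ in the Sibson ratio) than the paper's terse appendix, but the architecture is the same.
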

\begin{proof}
See Appendix \ref{proof:interpetations_alpha_MI}.
\end{proof}
\begin{remark}
Arimoto MI can be interpreted as the multiplicative increase of the maximal (generalized) expected gain of the $g_{\alpha}(x, r), g_{\alpha, \text{PS}}(x, r)$, and $g_{\infty}(x, r)$ of a guessing adversary upon observing published data $Y$, where Eq. \eqref{eq:Arimoto_privacy_01} was originally proved by Liao \textit{et al.} \cite[Thm 1]{8804205}. 
Sibson MI can also be interpreted as a privacy measure similar to Arimoto MI, but the distribution over which the mean is taken changes from $p_{X}$ to $p_{X_{\frac{1}{\alpha}}}$. 
This can be considered a privacy metric when assuming the following adversary\footnote{Similar observations on the tilted distribution are in \cite{10619672} and \cite{9761766}. }. 
\begin{itemize}
\item When $\alpha \in (0, 1)$, the adversary tends to overestimate the probability of large values of $p_{X}(x)$ and ignore the probability of small values of $p_{X}(x)$.  
\item When $\alpha \in (1, \infty)$, the adversary has a balanced belief on $X$ that is close to a uniform distribution overall by estimating high values of $p_{X}(x)$ as low and small values as large.
\end{itemize}
From \eqref{eq:Hayashi_privacy_01}, Hayashi MI can be interpreted as a privacy measure when assuming an adversary with a different gain function, $g_{\alpha, \text{PW}}(x, r)$, compared with Arimoto MI.
Meanwhile, Augustin--Csisz{\' a}r MI differs from Arimoto MI in that it takes the mean of $X$ as a regular geometric mean $\mathbb{G}^{p_{X}}[\cdot]$. 
Therefore, Augustin--Csisz{\' a}r MI can be interpreted as a privacy measure when an adversary wants to guess ratio-related data, such as growth or interest rates, and seeks to increase the average gain in the geometric mean rather than the usual expectation.
Finally, Lapidoth--Pfister MI differs from Arimoto MI in that it takes the mean for $X$ as a power mean of order $\frac{\alpha}{2\alpha-1}$ using a $\frac{\alpha}{2\alpha-1}$-tilted distribution of $p_{X}$, i.e., $\mathbb{M}_{\frac{\alpha}{2\alpha-1}}^{p_{X_{\frac{\alpha}{2\alpha-1}}}}[\cdot]$.
Notably, the power mean $\mathbb{M}_{\frac{\alpha}{2\alpha-1}}^{p_{X_{\frac{\alpha}{2\alpha-1}}}}[\cdot]$ transitions from $\vE^{p_{X}}[\cdot]$ to $\mathbb{M}_{\frac{1}{2}}^{p_{X_{\frac{1}{2}}}}[\cdot]$ when $\alpha \in (1, \infty)$, and in general, $\mathbb{G}^{p_{X}}[\cdot]\leq \mathbb{M}_{\frac{1}{2}}^{p_{X_{\frac{1}{2}}}}[\cdot] \leq \vE^{p_{X}}[\cdot]$.
This means that Lapidoth--Pfister MI can be considered a privacy measure for an adversary with intermediate properties between those assumed in Arimoto MI and Augustin--Csisz{\' a}r MI when $\alpha\in (1, \infty)$.
\end{remark}

\section{Conclusion}\label{sec:conclusion}
This paper provides several novel representations of various types of $\alpha$-MI using divergence and conditional entropy. 
Based on these representations, we propose novel R{\' e}nyi conditional entropies and interpretations of $\alpha$-MI as privacy metric. 
As byproducts of the representations, we propose novel conditional R{\' enyi} entropies that satisfy CRE and DPI.

\appendices

\section{Proof of Theorem \ref{thm:alpha_MI_diff}} \label{proof:alpha_MI_diff}

To prove Theorem \ref{thm:alpha_MI_diff}, we first state the following lemmas.

\begin{lemma} \label{lem:invariance_renyi_entropy}
Let $\alpha\in (0, 1) \cup (1, \infty)$. Given a distribution $p_{X}$, 
\begin{enumerate}
\item The $\alpha$-tilted distribution of the $\frac{1}{\alpha}$-tilted distribution of $p_{X}$ is $p_{X}$, i.e., 
$p_{(X_{\frac{1}{\alpha}})_{\alpha}} = p_{X}$. 
\item 
\begin{align}
H_{\alpha}(p_{X_{\frac{1}{\alpha}}}) &= H_{\frac{1}{\alpha}}(p_{X}).
\end{align}
\end{enumerate}
\end{lemma}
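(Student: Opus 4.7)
The plan is to prove both parts by direct algebraic manipulation from the definition of the $\alpha$-tilted distribution in Eq.~\eqref{eq:alpha_tilted} and the definition of R\'enyi entropy.

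For part 1, I would substitute the formula $p_{X_{1/\alpha}}(x) = p_X(x)^{1/\alpha}/Z$ with normalizer $Z := \sum_{x'} p_X(x')^{1/\alpha}$ into the definition of the $\alpha$-tilted distribution. Raising to the $\alpha$-th power yields $p_{X_{1/\alpha}}(x)^\alpha = p_X(x)/Z^\alpha$, and the normalizer $Z^\alpha$ cancels against the denominator $\sum_{x''} p_{X_{1/\alpha}}(x'')^\alpha = (\sum_{x''} p_X(x''))/Z^\alpha = 1/Z^\alpha$, leaving $p_X(x)$. This involution-type identity is essentially a consequence of the fact that tilting by exponent $\beta$ and then by $1/\beta$ inverts the map on the simplex.

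For part 2, I would compute $H_\alpha(p_{X_{1/\alpha}})$ directly. Using the same substitution,
\begin{align}
\sum_{x} p_{X_{1/\alpha}}(x)^\alpha &= \frac{1}{Z^\alpha}\sum_{x} p_X(x) = \frac{1}{Z^\alpha},
\end{align}
so $H_\alpha(p_{X_{1/\alpha}}) = \frac{1}{1-\alpha}\log Z^{-\alpha} = \frac{\alpha}{\alpha-1}\log Z$. Comparing with $H_{1/\alpha}(p_X) = \frac{1}{1-1/\alpha}\log Z = \frac{\alpha}{\alpha-1}\log Z$ finishes the identity.

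Both parts are routine manipulations with no substantive obstacle; the only care needed is bookkeeping of the exponents $\alpha$ and $1/\alpha$ together with the normalization constant, and ensuring $\alpha\neq 1$ so the prefactors are finite (which is included in the hypothesis $\alpha\in(0,1)\cup(1,\infty)$). Part 2 can alternatively be seen as a corollary of part 1 combined with the skew-symmetry-like relation $(1-\alpha)H_\alpha(p) = (1-\alpha)\log\|p\|_\alpha^\alpha$, but I expect the direct computation above is the cleanest presentation.
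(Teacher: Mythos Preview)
Your proof is correct. For part~2 your direct computation is essentially identical to the paper's; for part~1 you are in fact more explicit than the paper, which simply cites an external reference (\cite[Prop~1]{10619200}) rather than writing out the two-line substitution you give.
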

\begin{proof}
1) follows from \cite[Prop 1]{10619200}. We only prove 2) as follows.
\begin{align}
H_{\alpha}(p_{X_{\frac{1}{\alpha}}}) &= \frac{\alpha}{1-\alpha} \log \left( \sum_{x} \left( \frac{p_{X}(x)^{\frac{1}{\alpha}}}{\sum_{x}p_{X}(x)^{\frac{1}{\alpha}}} \right)^{\alpha} \right)^{\frac{1}{\alpha}} \\ 
&= \frac{1}{\alpha-1} \log \left( \sum_{x}p_{X}(x)^{\frac{1}{\alpha}} \right)^{\alpha} = H_{\frac{1}{\alpha}}(p_{X}).
\end{align}
\end{proof}

\begin{lemma} \label{lem:VC_AC_LP_MI}
For $\alpha \in (0,1)\cup(1, \infty)$, 
\begin{align}
&I_{\alpha}^{\text{C}}(X; Y) = H(p_{X}) \notag \\ 
&+ \max_{r_{X\mid Y}} \frac{\alpha}{\alpha-1}\sum_{x}p_{X}(x) \log \sum_{y}p_{Y\mid X}(y | x)  r_{X\mid Y}(x | y)^{1-\frac{1}{\alpha}}. \label{eq:VC_AC_MI} 
\end{align}
For $\alpha \in (1/2, 1)\cup (1, \infty)$, 
\begin{align}
&I_{\alpha}^{\text{LP}}(X; Y) = \max_{r_{X\mid Y}} \Big\{\frac{2\alpha-1}{\alpha-1}  \notag \\ 
&\times \log \sum_{x}p_{X}(x)^{\frac{\alpha}{2\alpha-1}} \left( \sum_{y} p_{Y\mid X}(y | x) r_{X\mid Y}(x | y)^{1-\frac{1}{\alpha}} \right)^{\frac{\alpha}{2\alpha-1}} \Big\}. \label{eq:VC_LP_MI}
\end{align}

\end{lemma}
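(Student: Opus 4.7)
The plan is to prove both identities by combining H{\" o}lder's inequality with Jensen's inequality, treating the reverse channel $r_{X\mid Y}$ as an auxiliary parameter dual to $q_Y$ (and additionally $q_X$ in the LP case). For each, I would establish the bound by variational manipulation and then verify equality by constructing an explicit optimizer $r_{X\mid Y}^\ast$ from the minimizers of the original definitions.

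For the Augustin--Csisz{\' a}r case \eqref{eq:VC_AC_MI}, the key step is H{\" o}lder's inequality with conjugate exponents $\alpha$ and $\alpha/(\alpha-1)$, applied for each fixed $x$ to the sequences $p_{Y\mid X}(y\mid x) q_Y(y)^{(1-\alpha)/\alpha}$ and $[q_Y(y) r_{X\mid Y}(x\mid y)]^{(\alpha-1)/\alpha}$. For $\alpha > 1$ this yields
\begin{align*}
&\sum_y p_{Y\mid X}(y\mid x) r_{X\mid Y}(x\mid y)^{1-\frac{1}{\alpha}} \\
&\leq \Big(\sum_y p_{Y\mid X}(y\mid x)^\alpha q_Y(y)^{1-\alpha}\Big)^{\frac{1}{\alpha}} \Big(\sum_y q_Y(y) r_{X\mid Y}(x\mid y)\Big)^{1-\frac{1}{\alpha}},
\end{align*}
with the reverse inequality for $\alpha \in (0,1)$; combined with the sign of $1/(\alpha-1)$ the net direction is preserved. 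Taking logarithms, multiplying by $\frac{\alpha}{\alpha-1} p_X(x)$, and summing over $x$ produces $\vE^{p_X}[D_\alpha(p_{Y\mid X}(\cdot\mid X) \| q_Y)]$ plus a residual term $\sum_x p_X(x) \log \sum_y q_Y(y) r_{X\mid Y}(x\mid y)$. A Jensen step, using $\sum_{x,y} q_Y(y) r_{X\mid Y}(x\mid y) = 1$, bounds this residual by $-H(p_X)$, with equality iff the marginal $\sum_y q_Y(y) r_{X\mid Y}(x\mid y)$ equals $p_X(x)$. Minimizing over $q_Y$ and maximizing over $r_{X\mid Y}$ yields (RHS of \eqref{eq:VC_AC_MI}) $\leq I_\alpha^{\text{C}}(X;Y)$. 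Equality is attained at $r_{X\mid Y}^\ast(x\mid y) = p_X(x) p_{Y\mid X}(y\mid x)^\alpha q_Y^\ast(y)^{-\alpha} / \exp((\alpha-1) D_\alpha(p_{Y\mid X}(\cdot\mid x) \| q_Y^\ast))$, where $q_Y^\ast$ is the AC minimizer; normalization $\sum_x r_{X\mid Y}^\ast(x\mid y) = 1$ follows from the Augustin fixed-point equation for $q_Y^\ast$, and the construction is designed to simultaneously attain equality in both H{\" o}lder (for every $x$) and Jensen.

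For the Lapidoth--Pfister case \eqref{eq:VC_LP_MI}, I would first eliminate $q_X$ analytically in the definition \eqref{eq:LP_MI} via the power-mean identity $\min_{q_X \in \Delta_{\mathcal X}} \sum_x h(x) q_X(x)^{1-\alpha} = (\sum_x h(x)^{1/\alpha})^\alpha$ (with minimizer $q_X^\ast \propto h^{1/\alpha}$). This reduces $I_\alpha^{\text{LP}}$ to a one-variable minimization over $q_Y$, and the composition of the outer $1/\alpha$ exponent with the H{\" o}lder step of the AC proof produces the exponent $\alpha/(2\alpha-1)$ and the tilted weight $p_X(x)^{\alpha/(2\alpha-1)}$ that match $H_{\alpha/(2\alpha-1)}$ in \eqref{eq:VC_LP_MI}. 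The remaining argument is structurally identical to the AC case, with equality attained by the natural analogue of $r_{X\mid Y}^\ast$ built from $q_X^\ast$ and $q_Y^\ast$. The main obstacle is the bookkeeping of exponents from iterated H{\" o}lder applications and verifying that a single $r_{X\mid Y}^\ast$ simultaneously attains equality in both H{\" o}lder steps; the restriction $\alpha \in (1/2, 1) \cup (1, \infty)$ enters here precisely to guarantee $\alpha/(2\alpha-1) > 0$ so that the tilted distributions and power means in the final expression remain well defined.
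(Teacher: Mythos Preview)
Your approach is correct and genuinely different from the paper's. The paper does not argue directly via H{\"o}lder and Gibbs inequalities; instead it invokes the Shayevitz-type variational representations $I_{\alpha}^{\text{C}}=\min_{\tilde q_{Y\mid X}}\{I(p_X,\tilde q_{Y\mid X})+\tfrac{\alpha}{1-\alpha}D(p_X\tilde q_{Y\mid X}\Vert p_Xp_{Y\mid X})\}$ and the analogous Lapidoth--Pfister form for $I_{\alpha}^{\text{LP}}$, then inserts the reverse-channel variational formula for Shannon MI, swaps $\min$ and $\max$ via Sion's minimax theorem, and solves the inner optimization in closed form by completing to a KL divergence. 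Your route bypasses both the auxiliary Shayevitz representation and the minimax theorem: you dualize $q_Y$ (and, in the LP case, $q_X$) against $r_{X\mid Y}$ directly with H{\"o}lder, and you close the gap not by a saddle-point argument but by exhibiting an explicit $r_{X\mid Y}^\ast$ whose normalization is exactly the Augustin (resp.\ LP) fixed-point equation for the minimizer $q_Y^\ast$. The paper's argument is cleaner in that it never needs to know these fixed-point equations and treats both $\alpha$-ranges uniformly through the minimax swap; your argument is more elementary (only H{\"o}lder/Gibbs) and more explicit about the optimal reverse channel, at the cost of importing existence of the minimizers and their first-order conditions from the Augustin/LP literature. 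One minor point: in the LP case the second aggregation over $x$ is another H{\"o}lder step (with exponents $(2\alpha-1)/\alpha$ and $(2\alpha-1)/(\alpha-1)$) rather than a Gibbs/Jensen step as in the AC case, so ``structurally identical'' slightly undersells the difference---but you already flag ``iterated H{\"o}lder applications'' as the bookkeeping to watch, and the calculation does go through.
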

\begin{proof}
Eq. \eqref{eq:VC_AC_MI} has already proven in \cite[Thm 2]{kamatsuka2024algorithms}; we reprove it here for completeness.

We define the distributions as follows: 
\begin{align}
&\hat{q}_{Y\mid X}(y\mid x) := \frac{p_{Y\mid X}(y\mid x)r_{X\mid Y}(x\mid y)^{1-\frac{1}{\alpha}}}{\sum_{y}p_{Y\mid X}(y\mid x)r_{X\mid Y}(x\mid y)^{1-\frac{1}{\alpha}}} \label{eq:hat_q_Y_X} \\ 
&\hat{q}_{X}(x) := \frac{p_{X}(x)^{\frac{\alpha}{2\alpha-1}} \left( \sum_{y}p_{Y\mid X}(y | x)r_{X\mid Y}(x | y)^{1-\frac{1}{\alpha}} \right)^{\frac{\alpha}{2\alpha-1}}}{\sum_{x}p_{X}(x)^{\frac{\alpha}{2\alpha-1}} \left( \sum_{y}p_{Y\mid X}(y | x)r_{X\mid Y}(x | y)^{1-\frac{1}{\alpha}} \right)^{\frac{\alpha}{2\alpha-1}}}, \label{eq:update_LP_q_X}
\end{align}
Then, for $\alpha \in (0, 1)$, 
\begin{align}
&I_{\alpha}^{\text{C}}(X; Y) \notag \\ 
&\overset{(a)}{=} \min_{\tilde{q}_{Y\mid X}} \left\{I(p_{X}, \tilde{q}_{Y\mid X}) + \frac{\alpha}{1-\alpha} D(p_{X}\tilde{q}_{Y\mid X} || p_{X}p_{Y\mid X})\right\} \label{eq:proof_begin} \\ 
&\overset{(b)}{=} \min_{\tilde{q}_{Y\mid X}} \max_{r_{X\mid Y}} 
\Biggl\{\vE^{p_{X}\tilde{q}_{Y\mid X}} \left[\log \frac{r_{X\mid Y}(X\mid Y)}{p_{X}(X)}\right] \notag \\ 
&\qquad \qquad \qquad \qquad + \frac{\alpha}{1-\alpha} D(p_{X}\tilde{q}_{Y\mid X} || p_{X}p_{Y\mid X})\Biggr\} \\ 
&\overset{(c)}{=} \max_{r_{X\mid Y}} \min_{\tilde{q}_{Y\mid X}}
\Biggl\{\vE^{p_{X}\tilde{q}_{Y\mid X}} \left[\log \frac{r_{X\mid Y}(X\mid Y)}{p_{X}(X)}\right] \notag \\ 
&\qquad \qquad \qquad \qquad + \frac{\alpha}{1-\alpha} D(p_{X}\tilde{q}_{Y\mid X} || p_{X}p_{Y\mid X})\Biggr\} \\
&= \max_{r_{X\mid Y}} \min_{\tilde{q}_{Y\mid X}} \Biggl\{H(p_{X}) \notag \\\ 
&+\frac{\alpha}{1-\alpha} \vE^{p_{X}} \left[D\left(\tilde{q}_{Y\mid X}(\cdot \mid X) || \hat{q}_{Y\mid X}(\cdot \mid X)\right)\right] 
\notag \\
&+ \frac{\alpha}{\alpha-1} \sum_{x}p_{X}(x)\log \sum_{y} p_{Y\mid X}(y | x)r_{X\mid Y}(x | y)^{1-\frac{1}{\alpha}}\Biggr\} \label{eq:proof_end} \\
&\overset{(d)}{=} \eqref{eq:VC_AC_MI}, 
\end{align}
where $\hat{q}_{Y\mid X}(\cdot | x)$ is defined in \eqref{eq:hat_q_Y_X} and 
$(a)$ follows from \cite[Thm 1]{6034266}, $(b)$ follows from a variational representation of the Shannon MI using a reverse channel (see, e.g., \cite[Lemma 10.8.1]{Cover:2006:EIT:1146355}), 
$(c)$ follows from the minimax theorem (see \cite[Thm 4.2]{pjm/1103040253})\footnote{Notably, $\tilde{F}_{\alpha}^{\text{C}}(\tilde{q}_{Y\mid X}, r_{X\mid Y}):= \vE^{p_{X}\tilde{q}_{Y\mid X}} \left[\log \frac{r_{X\mid Y}(X\mid Y)}{p_{X}(X)}\right] + \frac{\alpha}{1-\alpha} D(p_{X}\tilde{q}_{Y\mid X} || p_{X}p_{Y\mid X}), \alpha \in (0, 1)$ is concave with respect to $r_{X\mid Y}$ (see, e.g. \cite[Sec. 10.3.2]{10.5555/1199866}) and is linear (hence convex) with respect to $\tilde{q}_{Y\mid X}$. },  
and $(d)$ follows from the properties of the Kullback--Leibler divergence (see, e.g., \cite[Thm 2.6.3]{Cover:2006:EIT:1146355}). 
For $\alpha \in (1, \infty)$, Eq. \eqref{eq:VC_AC_MI} can be obtained by replacing $\min_{\tilde{q}_{Y\mid X}}$ with $\max_{\tilde{q}_{Y\mid X}}$ in Eqs. \eqref{eq:proof_begin}--\eqref{eq:proof_end} (see \cite[Thm 1]{6034266}).

Similarly, for $\alpha \in (1/2, 1)$, 
\begin{align}
&I_{\alpha}^{\text{LP}}(X; Y) 
\overset{(e)}{=} \min_{\tilde{q}_{X, Y}}\left\{ \frac{\alpha}{1-\alpha} D(\tilde{q}_{X, Y} || p_{X}p_{Y\mid X}) + I(\tilde{q}_{X}, \tilde{q}_{Y\mid X}) \right\} \label{eq:proof_begin_LP} \\
&\overset{(f)}{=} \min_{\tilde{q}_{X, Y}}\max_{r_{X\mid Y}} \Bigl\{ \frac{\alpha}{1-\alpha} D(\tilde{q}_{X, Y} || p_{X}p_{Y\mid X}) \notag \\
&\qquad \qquad \qquad + \vE^{\tilde{q}_{X, Y}}\left[\log \frac{r_{X\mid Y}(X\mid Y)}{\tilde{q}_{X}(X)}\right] \Bigr\} \\ 
&\overset{(g)}{=} \max_{r_{X\mid Y}} \min_{\tilde{q}_{X, Y}}\Bigl\{ \frac{\alpha}{1-\alpha} D(\tilde{q}_{X, Y} || p_{X}p_{Y\mid X}) \notag \\
&\qquad \qquad \qquad + \vE^{\tilde{q}_{X, Y}}\left[\log \frac{r_{X\mid Y}(X\mid Y)}{\tilde{q}_{X}(X)}\right] \Bigr\} \\ 
&= \max_{r_{X\mid Y}} \min_{\tilde{q}_{X, Y}}\Bigl\{ \frac{\alpha}{1-\alpha} D(\tilde{q}_{X, Y} || \tilde{q}_{X}p_{Y\mid X}) \notag \\ 
&\qquad + \vE^{\tilde{q}_{X, Y}}\left[\log \frac{r_{X\mid Y}(X\mid Y)}{\tilde{q}_{X}(X)}\right] + D(\tilde{q}_{X} || p_{X})\Bigr\} \\
&= \max_{r_{X\mid Y}}\min_{\tilde{q}_{X}}\min_{\tilde{q}_{Y\mid X}} \Bigl\{\frac{\alpha}{1-\alpha} \vE^{\tilde{q}_{X}}\left[D(\tilde{q}_{Y\mid X}(\cdot \mid X) || \hat{q}_{Y\mid X}(\cdot \mid X))\right] \notag \\ 
& + H(\tilde{q}_{X}) + \frac{\alpha}{1-\alpha}D(\tilde{q}_{X} || p_{X})\Bigr\} \\ 
&\overset{(h)}{=}\max_{r_{X\mid Y}}\min_{\tilde{q}_{X}} \Bigl\{\frac{2\alpha-1}{1-\alpha} D(\tilde{q}_{X} || \hat{q}_{X})+\frac{2\alpha-1}{\alpha-1} \notag \\
&\times \log \sum_{x}p_{X}(x)^{\frac{\alpha}{2\alpha-1}} \left( \sum_{y} p_{Y\mid X}(y | x) r_{X\mid Y}(x | y)^{1-\frac{1}{\alpha}} \right)^{\frac{\alpha}{2\alpha-1}}\Bigr\} \label{eq:proof_end_LP} \\ 
&\overset{(i)}{=} \eqref{eq:VC_LP_MI}, 
\end{align}
where $(e)$ follows from \cite[Lemma 8]{e21080778}, 
$(f)$ follows from the variational representation of the Shannon MI using a reverse channel, 
$(g)$ follows from the minimax theorem\footnote{Notably, 
$\tilde{F}_{\alpha}^{\text{LP}}(\tilde{q}_{X, Y}, r_{X\mid Y}):=\frac{\alpha}{1-\alpha} D(\tilde{q}_{X, Y} || p_{X}p_{Y\mid X}) + \vE^{\tilde{q}_{X, Y}}\left[\log \frac{r_{X\mid Y}(X\mid Y)}{\tilde{q}_{X}(X)}\right] 
= \frac{\alpha}{1-\alpha} D(\tilde{q}_{X, Y} || \tilde{q}_{X}p_{Y\mid X}) - \frac{\alpha}{1-\alpha}\vE^{\tilde{q}_{X}}[\log p_{X}(X)] + \vE^{\tilde{q}_{X, Y}}\left[\log r_{X\mid Y}(X\mid Y)\right] + \frac{1-2\alpha}{1-\alpha}H(\tilde{q}_{X}), \alpha \in (1/2, 1)$ is concave with respect to $r_{X\mid Y}$ and is convex with respect to $\tilde{q}_{X, Y}$.}, 
and $(h), (i)$ follows from the properties of the Kullback--Leibler divergence. 
For $\alpha \in (1, \infty)$, Eq. \eqref{eq:VC_LP_MI} can be obtained by replacing $\min_{\tilde{q}_{X, Y}}$ with $\max_{\tilde{q}_{X, Y}}$ in Eqs. \eqref{eq:proof_begin_LP}--\eqref{eq:proof_end_LP} (see \cite[Lemma 8]{e21080778}).
\end{proof}

Using Lemma \ref{lem:invariance_renyi_entropy} and \ref{lem:VC_AC_LP_MI}, we prove Theorem \ref{thm:alpha_MI_diff} as follows.
\begin{proof}
From Eqs.\eqref{eq:Sibson_MI_Gallager} and \eqref{eq:Arimoto_MI_Gallager} and 1) of Lemma \ref{lem:invariance_renyi_entropy}, it follows that $I_{\alpha}^{\text{S}}(p_{X}, p_{Y\mid X}) = I_{\alpha}^{\text{A}}(p_{X_{\frac{1}{\alpha}}}, p_{Y\mid X})$.
By using 2) of Lemma \ref{lem:invariance_renyi_entropy} and Eq. \eqref{eq:Arimoto_cont_renyi_ent} and comparing with the Arimoto MI defined in \eqref{eq:Arimoto_MI}, we obtain \eqref{eq:Sibson_diff_expression}. 
Eq. \eqref{eq:AC_diff_expression} follows immediately from \eqref{eq:VC_AC_MI}. 
We prove \eqref{eq:LP_cond_renyi_ent} as follows.
\begin{align}
&I_{\alpha}^{\text{LP}}(X; Y) \notag \\
&\overset{(a)}{=} H_{\frac{\alpha}{2\alpha-1}}(X) - H_{\frac{\alpha}{2\alpha-1}}(X) + \max_{r_{X\mid Y}} \Big\{\frac{2\alpha-1}{\alpha-1}  \notag \\ 
&\times \log \sum_{x}p_{X}(x)^{\frac{\alpha}{2\alpha-1}} \left( \sum_{y} p_{Y\mid X}(y | x) r_{X\mid Y}(x | y)^{1-\frac{1}{\alpha}} \right)^{\frac{\alpha}{2\alpha-1}} \Big\} \\ 
&= H_{\frac{\alpha}{2\alpha-1}}(X) -\frac{2\alpha-1}{\alpha-1}\log \sum_{x}p_{X}(x)^{\frac{\alpha}{2\alpha-1}} + \max_{r_{X\mid Y}} \Big\{\frac{2\alpha-1}{\alpha-1}  \notag \\ 
&\times \log \sum_{x}p_{X}(x)^{\frac{\alpha}{2\alpha-1}} \left( \sum_{y} p_{Y\mid X}(y | x) r_{X\mid Y}(x | y)^{1-\frac{1}{\alpha}} \right)^{\frac{\alpha}{2\alpha-1}} \Big\} \\ 
&= H_{\frac{\alpha}{2\alpha-1}}(X) + \max_{r_{X\mid Y}} \Big\{\frac{2\alpha-1}{\alpha-1}  \notag  \\ 
&\times \log \sum_{x}\frac{p_{X}(x)^{\frac{\alpha}{2\alpha-1}}}{\sum_{x}p_{X}(x)^{\frac{\alpha}{2\alpha-1}}} \Bigl( \sum_{y} p_{Y\mid X}(y | x) r_{X\mid Y}(x | y)^{1-\frac{1}{\alpha}} \Bigr)^{\frac{\alpha}{2\alpha-1}} \Big\} \\ 
&= \eqref{eq:LP_cond_renyi_ent},
\end{align}
where $(a)$ follows from Eq. \eqref{eq:VC_LP_MI}.
\end{proof}

\section{Proof of Theorem \ref{thm:interpetations_alpha_MI}} \label{proof:interpetations_alpha_MI}
To prove Theorem \ref{thm:interpetations_alpha_MI}, we first state the following lemma. 
\begin{lemma} \label{lem:maximal_expected_score}
Given a distribution $p_{X}$, 
\begin{align}
&\max_{r} \vE^{p_{X}}[g_{\alpha}(X, r)] = \max_{r} \vE^{p_{X}}[g_{\alpha, \text{PS}}(X, r)] =  \frac{\alpha}{\alpha-1} \norm{p_{X}}_{\alpha}, \\ 
&\max_{r} \vE^{p_{X}}[g_{\alpha, \text{PW}}(X, r)] =\frac{1}{\alpha-1} \norm{p_{X}}_{\alpha}^{\alpha}.
\end{align}
\end{lemma}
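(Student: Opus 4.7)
The plan is to handle the three maximizations separately, since each has an already-known maximizer. The remark following the definition of the pseudospherical and power scores notes that $g_{\alpha,\text{PS}}$ and $g_{\alpha,\text{PW}}$ are proper scoring rules, so the maximizer over $r\in\Delta_{\mathcal{X}}$ of $\mathbb{E}^{p_X}[g_{\alpha,\text{PS}}(X,r)]$ and of $\mathbb{E}^{p_X}[g_{\alpha,\text{PW}}(X,r)]$ is $r^\ast=p_X$. Similarly, the remark invokes \cite[Lemma 1]{8804205}, which identifies the maximizer of $\mathbb{E}^{p_X}[g_{\alpha}(X,r)]$ as the $\alpha$-tilted distribution $r^\ast(x)=p_X^{(\alpha)}(x)=p_X(x)^\alpha/\sum_{x'}p_X(x')^\alpha$. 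So the proof reduces to plugging these maximizers into the respective expected gains and simplifying.

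For $g_\alpha$, substituting $r^\ast(x)=p_X(x)^\alpha/\|p_X\|_\alpha^\alpha$ gives $r^\ast(x)^{1-1/\alpha}=p_X(x)^{\alpha-1}/\|p_X\|_\alpha^{\alpha-1}$, so that
\begin{align*}
\mathbb{E}^{p_X}[g_\alpha(X,r^\ast)] &= \tfrac{\alpha}{\alpha-1}\sum_x p_X(x)\cdot\tfrac{p_X(x)^{\alpha-1}}{\|p_X\|_\alpha^{\alpha-1}}
= \tfrac{\alpha}{\alpha-1}\cdot\tfrac{\|p_X\|_\alpha^\alpha}{\|p_X\|_\alpha^{\alpha-1}}=\tfrac{\alpha}{\alpha-1}\|p_X\|_\alpha.
\end{align*}
For the pseudospherical score, substituting $r^\ast=p_X$ yields $\sum_x p_X(x)(p_X(x)/\|p_X\|_\alpha)^{\alpha-1}=\|p_X\|_\alpha^\alpha/\|p_X\|_\alpha^{\alpha-1}=\|p_X\|_\alpha$, giving the same value $\frac{\alpha}{\alpha-1}\|p_X\|_\alpha$. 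For the power score, substituting $r^\ast=p_X$ gives $\frac{\alpha}{\alpha-1}\sum_x p_X(x)^\alpha-\|p_X\|_\alpha^\alpha=\bigl(\frac{\alpha}{\alpha-1}-1\bigr)\|p_X\|_\alpha^\alpha=\frac{1}{\alpha-1}\|p_X\|_\alpha^\alpha$.

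There is no real obstacle: everything is exponent bookkeeping once the optimizers are known. If one preferred to make the proof self-contained rather than citing the PSR property and \cite[Lemma 1]{8804205}, the optimizers can be re-derived in a sentence each, either by Lagrange multipliers on the simplex (for $g_\alpha$: stationarity gives $r(x)^{-1/\alpha}\propto 1/p_X(x)$, i.e.\ $r(x)\propto p_X(x)^\alpha$) or by Hölder's inequality (for $g_{\alpha,\text{PS}}$: $\sum_x p_X(x)r(x)^{\alpha-1}\le\|p_X\|_\alpha\|r\|_\alpha^{\alpha-1}$ with equality at $r\propto p_X$, combined with the sign of $\frac{\alpha}{\alpha-1}$ to flip the direction appropriately for $\alpha\in(0,1)$). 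The only mild care required is tracking the sign of $\frac{\alpha}{\alpha-1}$ across the regimes $\alpha>1$ and $\alpha<1$, which is handled uniformly by the closed-form substitution.
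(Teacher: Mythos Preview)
Your proposal is correct and follows essentially the same approach as the paper: the paper's proof is a one-line appeal to \cite[Lemma~1]{8804205} for $g_\alpha$ and to the PSR property for $g_{\alpha,\text{PS}}$ and $g_{\alpha,\text{PW}}$, and you have simply written out the substitutions and exponent bookkeeping that this implies.
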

\begin{proof}
It follows from \text{\cite[Lemma 1]{8804205}} and the fact that $g_{\alpha, \text{PS}}(x, r)$ and $g_{\alpha, \text{PW}}(x, r)$ are PSR.
\end{proof}
We prove Theorem \ref{thm:interpetations_alpha_MI} as follows.
\begin{proof}
Eq. \eqref{eq:Arimoto_privacy_01} has already been proven by Liao \textit{et al.}\cite[Thm 1]{8804205}. 
From Proposition \ref{prop:relationship_Arimoto_Sibson} and 1) of Lemma \ref{lem:invariance_renyi_entropy}, \eqref{eq:Sibson_privacy_01} can be obtained by replacing $p_{X}$ with $p_{X_{\frac{1}{\alpha}}}$ in \eqref{eq:Arimoto_privacy_01}.
Eq. \eqref{eq:AC_privacy_01} follows from Theorem \ref{thm:alpha_MI_diff} and the following representation of the Shannon entropy, which can be easily verified.
\begin{align}
H(X) &= \frac{\alpha}{1-\alpha}\max_{r} \log \mathbb{G}^{p_{X}} \left[r(X)^{1-\frac{1}{\alpha}}\right] \\ 
&= \frac{\alpha}{1-\alpha} \log \max_{r}\mathbb{G}^{p_{X}} \left[r(X)^{1-\frac{1}{\alpha}}\right].
\end{align}
Eq. \eqref{eq:Hayashi_privacy_01} follows from Remark \ref{rem:VC_Arimoto_Hayashi_ent} and Lemma \ref{lem:maximal_expected_score}.
Eq. \eqref{eq:LP_privacy_01} follows from Theorem \ref{thm:alpha_MI_diff}, Definitions \ref{def:power_mean} and \ref{def:gen_geometic_mean}, and the following representation of the R{\' e}nyi entropy,  which can be easily verified from Lemma \ref{lem:maximal_expected_score}.
\begin{align}
H_{\alpha}(X) &= \frac{\alpha}{1-\alpha}\log \max_{r} \vE^{p_{X}} \left[r(X)^{1-\frac{1}{\alpha}}\right] \\ 
&= -\log \max_{r} \mathbb{M}_{1-\frac{1}{\alpha}}^{p_{X}}[r(X)].
\end{align}
Eqs. \eqref{eq:Arimoto_privacy_02}, \eqref{eq:Sibson_privacy_02}, and \eqref{eq:AC_privacy_02} follow from Lemma \ref{lem:maximal_expected_score}. 
The remainder follows from Definitions \ref{def:power_mean} and \ref{def:gen_geometic_mean} and Lemma \ref{lem:equivalent_power_gen_geometic_mean}. 
\end{proof}


\end{document}